\DeclarePairedDelimiter\floor{\lfloor}{\rfloor}
\newtheorem{remark}{Remark}
\newtheorem{definition}{\textbf{Definition}}
\newtheorem{theorem}{\textbf{Theorem}}
\newtheorem{lemma}{\textbf{Lemma}}
\newtheorem{example}{\textbf{Example}}
\newcommand\BibTeX{{\rmfamily B\kern-.05em \textsc{i\kern-.025em b}\kern-.08em
T\kern-.1667em\lower.7ex\hbox{E}\kern-.125emX}}
\begin{document}

\title{Quantization effects and convergence properties of rigid formation control systems with quantized distance measurements}

\author{Zhiyong Sun$^1$, Hector Garcia de Marina$^2$, \\ Brian D. O. Anderson$^{1,3}$ and Ming Cao$^4$ \\ {\small
$^1$ Data61-CSIRO (formerly NICTA) and
Research School of Engineering,} \\
{\small The Australian National University, Canberra ACT 2601, Australia.} \\
{\small $^2$ Unmanned Aerial Systems Centre, The Maersk Mc-Kinney Moller Institute.} \\{\small Southern University of Denmark.} \\
{\small $^3$ School of Automation, Hangzhou Dianzi University, Hangzhou 310018, China.} \\
{\small $^4$ Faculty of Mathematics and Natural Sciences,} \\ {\small ENTEG, University of Groningen, The Netherlands.}}




\maketitle
\section*{Abstract}

In this paper we discuss  quantization effects in rigid formation control systems when target formations are described by inter-agent distances.
Because of practical sensing and measurement constraints, we consider in this paper  distance measurements in their quantized forms. We show that under gradient-based formation control, in the case of uniform quantization, the distance errors converge locally to a bounded set whose size depends on the quantization error, while in the case of logarithmic quantization, all distance errors converge locally to zero. A special quantizer involving the signum function is then considered with which all agents can only measure coarse distances in terms of binary information. In this case the formation converges locally to a target formation   within a  finite time. Lastly, we   discuss the effect of asymmetric uniform quantization on rigid formation control.

\section{Introduction}

Quantized control has been an active research topic in the recent decade, motivated by the fact that digital sensors and numerous industrial controllers  can only generate quantized   measurements or  feedback signals \cite{brockett2000quantized, liberzon2003hybrid}.   Recent years have also witnessed extensive discussions on   quantized control for  networked control systems. This is because   data exchange and transmission  over networks often occurs  in a digitally quantized manner, thus giving rise to coarse and imperfect information; see e.g., \cite{kashyap2007quantized,   cai2011quantized, ceragioli2011discontinuities, liu2012quantization,  frasca2012continuous, guo2013consensus}.

In this paper, we aim to discuss the quantization effect on rigid formation control. Formation control based on graph rigidity is a typical networked control problem involving inter-agent measurements and cooperations. There have been many papers in the literature focusing  on control performance and convergence analysis for rigid formation control systems (see e.g. \cite{krick2009stabilisation, dorfler2010geometric, oh2014distance, oh2015survey,sun2015rigid, sun2016exponential}), with virtually all assuming that all agents can acquire the relative position measurements to their neighbors perfectly. We remark  that there are some recent works on linear-consensus-based formation control with quantized measurements. An exemplary  paper  along this line of research is \cite{Jafarian2015125}, which showed that by using very coarse measurements (i.e., measurements in terms of binary information) the formation stabilization task  can still be achieved. The case of coarse measurements can be seen as a special (or extreme) quantizer, which generates quantized feedbacks in the form of binary signals. However,  in \cite{Jafarian2015125} and similar works on linear-consensus-based formation control, a common knowledge of the global coordinate frame orientation is required for all the agents to implement the control law. This is a  strict assumption and is not always desirable  in practical formation control systems. Actually, it has been shown in \cite{meng2016formation}  that coordinate orientation mismatch may also cause undesired formation motions in linear-consensus-based formation systems. All these restrictions and disadvantages  are known to be avoidable  in rigid formation control systems, in which any common  knowledge of the global coordinate system is not required.

In the framework of quantized formation control, we also consider in the latter part of this paper a special quantizer  described by the \emph{signum} function. This part is motivated by the previous work \cite{liu2014controlling} which discussed  \emph{triangular} formation control with coarse distance measurements involving the signum function. In this paper we will consider a more general setting, which extends the discussions from  the triangular case in \cite{liu2014controlling} to more general rigid formations.

The  aim of this paper is to explore whether the introduction of quantized measurement and feedback can still guarantee the success  of formation control, and to what extent the controller performance limits exist. Our broad conclusion is that quantization is not fatal, but may reduce performance in achieving a target formation. {\color{black}Generally speaking, quantized measurements are a type of approximation of actual measurements, and such approximations bring about bounded measurement errors that depend on quantizer functions. We remark that formation shape control with distance measurement errors or biases is discussed in \cite{de2018taming, Mou_Problematic, de2015controlling, sun2017rigid}.  Measurement errors due to quantizations are different to measurement noises, in the sense that measurement errors induced by quantizations are deterministic, and some quantizers (especially logarithmic quantizers and binary quantizers) can also distinguish whether the quantity under quantization (distance error in the context of formation control) is zero.   
In this respect,  measurements might be coarse from quantization,
but the most important information (distance error being zero or not)
is known without any noise. Furthermore, if quantization errors are not large and do not exceed the convergence region for a formation control system, then the local stability and convergence of  formation systems are still guaranteed.
For different types of quantizers applied in a distance measurement, we will present a detailed analysis to characterize the actual convergence  and the boundedness of formation errors. 

}  

In this paper, we focus on local convergence of target formations with general shapes, by assuming that initial formations are minimally and infinitesimally rigid, and are close to a target formation, which is a common assumption that has been widely used in many papers on rigid formation control; see e.g. \cite{krick2009stabilisation, oh2015survey, Mou_Problematic, Hector2016maneuvering, sun2016exponential}. We note that local stability is also a practical problem, arising when wind disturbs a formation away from its desired shape, and the original shape has to be recovered. Local convergence of formation shape also has practical significance. For example, agents can firstly assemble an approximate formation close to the desired shape, and then apply the control law to achieve the target formation guaranteed by the local convergence.
Global analysis of formation convergence and stability  is however only available for some particular and simple formation shapes (see e.g., \cite{dorfler2010geometric} for 2-D triangular shape,  \cite{park2014stability} for 3-D tetrahedral shape, and \cite{chen2017global} for 2-D triangulated formations), while global analysis for rigid formation systems with general shapes is a challenging and open problem. Global analysis of formation convergence is therefore beyond the scope of this paper and will not be discussed here.

A preliminary version of this paper was presented in \cite{sun2016quantization}.
The extensions of this paper
compared to \cite{sun2016quantization} include  {\color{black} detailed proofs for all the key results which were omitted in \cite{sun2016quantization}, examples on several quantizer functions, and  a new section to discuss the formation convergence when using an asymmetric uniform quantizer. For formation control with binary distance measurement, we also present several new properties for such formation control systems, including the independence of a global coordinate frame, boundedness of the control input, and a subspace-preserving principle which will help analyze the transient behavior and implementation issue of such control laws. Furthermore, several simulation results which support the theoretical analysis are provided in this extended paper. }

The remaining parts are organized as follows.
Section \ref{sec:background} briefly reviews some background on graph rigidity and two commonly-used quantizer functions. Section \ref{sec:convergence_quantizer} discusses the convergence of the formation systems under two quantized formation controllers.  In Section \ref{sec:binary_quantizer} we show a special quantized formation controller with binary distance information. Section \ref{sec:asymmetric_quantizer} focuses on the case of an asymmetric uniform quantizer and its performance. Some illustrative examples are provided in Section \ref{sec:quantization_simulations}. Section \ref{sec:Conclusions} concludes this paper.

\section{Background and preliminaries} \label{sec:background}
\subsection{Notations}  \label{subsec:notations}
Most notations used in this paper are fairly standard, and here we introduce some special notations that will find use  in later analysis. {\color{black} The  operator } $\text{col}(\cdot)$ defines the stacked column vector. For a given matrix $A\in\mathbb{R}^{n\times m}$, define $\overline A := A \otimes I_d \in\mathbb{R}^{nd\times md}$, where the symbol $\otimes$ denotes the Kronecker product and $I_d$ is the $d$-dimensional identity matrix with $d=\{2, 3\}$. We denote by $||x||$ the Euclidean norm of a vector $x$, by $\hat x := \frac{x}{||x||}$ the unit vector of $x \neq 0$, and by $\tilde x := \frac{1}{||x||}$ the reciprocal of the norm of $x \neq 0$. For a stacked vector  $x := \begin{bmatrix}x_1^{\top}, x_2^{\top}, \dots,  x_k^{\top}\end{bmatrix}^{\top}$ with $x_i\in\mathbb{R}^{l}, i\in\{1,\dots,k\}$, we define the block diagonal matrix $D_x := \operatorname{diag}\{x_i\}_{i\in\{1,\dots,k\}} \in\mathbb{R}^{kl\times k}$.
\subsection{Graph  rigidity}  \label{subsec:rigidity}

{\color{black} Now we review some basic graph theoretic tools for modelling  formation control systems. For more background on graph theory in multi-agent systems and formation control, we refer the reader to \cite{mesbahi2010graph}. }
Consider an undirected graph with $m$ edges and $n$ vertices, denoted by $\mathcal{G} =( \mathcal{V}, \mathcal{E})$  with vertex set $\mathcal{V} = \{1,2,\cdots, n\}$ and edge set $\mathcal{E} \subseteq \mathcal{V} \times \mathcal{V}$.  The neighbor set $\mathcal{N}_i$ of node $i$ is defined as $\mathcal{N}_i: = \{j \in \mathcal{V}: (i,j) \in \mathcal{E}\}$.
 We define an oriented incidence matrix $B\in\mathbb{R}^{n\times |\mathcal{E}|}$ for the undirected graph $\mathcal{G}$ by assigning an \emph{arbitrary} orientation for each edge. Note that for a rigid formation modelled by an \emph{undirected} graph considered in this paper, the orientation of each edge for writing the incidence matrix can be defined arbitrarily and the stability analysis in the next sections remains unchanged.   By doing this, we define the entries of $B$ as $b_{ik} = -1$, if $i = {\mathcal{E}_k^{\text{tail}}}$, or $b_{ik} = +1$, if $i = {\mathcal{E}_k^{\text{head}}}$, or $b_{ik} = 0$ otherwise,
where $\mathcal{E}_k^{\text{tail}}$ and $\mathcal{E}_k^{\text{head}}$ denote the tail and head nodes, respectively, of the edge $\mathcal{E}_k$, i.e. $\mathcal{E}_k = (\mathcal{E}_k^{\text{tail}},\mathcal{E}_k^{\text{head}})$. For a connected undirected graph, one has $\text{null}(B^{\top}) = \text{span}\{\mathbf{1}_n\}$. 

We denote by $p=[p_1^{\top}, \, p_2^{\top}, \cdots, \, p_n^{\top}]^{\top} \in \mathbb{R}^{dn}$ the stacked vector of all the agents' positions $p_i\in \mathbb{R}^d$. We also define \emph{non-collocated} positions for all agents as those positions for which  $p_i \neq p_j$ for all $(i,j) \in \mathcal{E}$.  The pair $(\mathcal{G}, p)$ is said to be a framework of $\mathcal{G}$ in $\mathbb{R}^d$. The incidence matrix $B$ defines the sensing topology of the formation, i.e. it encodes the set of available relative positions that can be measured by the agents. One can construct the stacked vector of available relative positions by
\begin{equation}
z = {\overline B}^{\top} p,  \label{z_equation}
\end{equation}
where each element $z_k  \in \mathbb{R}^d$ in $z$ is the relative position vector for the vertex pair defined by the edge $\mathcal{E}_k$.

Let us now briefly recall the notions of infinitesimally rigid framework and minimally rigid framework from \cite{hendrickson1992conditions} and \cite{anderson2008rigid}.
Let us define the edge function  $f_{\mathcal{G}}(p) := \mathop{\text{col}}\limits_{k}\big(\|z_k\|^2\big)$.  We denote the Jacobian of $\frac{1}{2}f_{\mathcal{G}}(p)$ by $R(z)$, which is called the {\it rigidity matrix}. An easy calculation shows that $R(z) = D^{\top}_z\overline B^{\top}$. A framework $(\mathcal{G}, p)$ is {\it infinitesimally rigid} if $\text{rank} (R(z)) = 2n-3$ when it is embedded in $\mathbb{R}^2$ or if $\text{rank} (R(z)) = 3n-6$ when it is embedded in $\mathbb{R}^3$. If additionally  $|\mathcal E|=2n-3$ in the 2D case or $|\mathcal E|=3n-6$ in the 3D case then the framework is called {\it minimally rigid}.    In this paper we assume that the target formation is  infinitesimally and minimally rigid, while the convergence results obtained in this paper can be extended to   non-minimally but still infinitesimally rigid target formations   by following the analysis in \cite{Mou_Problematic}  or \cite{sun2016exponential}.

\subsection{Quantizer functions} \label{sec:quantizer_definition}
In this paper, we mainly consider two types of quantizers: the uniform quantizer and the logarithmic quantizer \cite{cai2011quantized, ceragioli2011discontinuities, frasca2012continuous,  liu2012quantization, guo2013consensus}. In later sections we will also consider two special quantizers, namely a quantizer involving the signum function and the asymmetric uniform quantizer, derived from these quantizer functions.

\subsubsection{Definition of the quantizers}
The \emph{symmetric} uniform quantizer is a map $q_u: \mathbb{R} \rightarrow \mathbb{R}$ such that
\begin{align} \label{eq:sym_uniform_quantizer}
q_u(x) = \delta_u \left( \left \lfloor\frac{x}{\delta_u} \right \rceil \right ),
\end{align}
where $\delta_u$ is a positive number and $\lfloor{a}\rceil, a \in \mathbb{R}$ denotes the nearest integer  to $a$.  We also define $\lfloor{\frac{1}{2}+h}\rceil = h$ for any $h \in \mathbb{Z}$.

The \emph{logarithmic} quantizer is an odd map $q_l: \mathbb{R} \rightarrow \mathbb{R}$ such that
     \begin{equation} \label{eq:logarithmic_quantizer}
     q_l(x) =  \left\{
       \begin{array}{cc}
       \text{exp}({q_u(\text{ln}x)})  \;\;&\text{when} \,\, x>0;  \\
       0                    \;\;& \text{when} \,\, x=0;  \\
       -\text{exp}({q_u(\text{ln}(-x))}) \;\;&\text{when} \,\, x<0.
       \end{array}
      \right.
      \end{equation}
where $\text{exp}(\cdot)$ denotes the exponential function.

\subsubsection{Properties of the quantizers}
For the uniform quantizer, the quantization error is always bounded by $\delta_u/2$, namely $|q_u(x) - x| \leq \frac{\delta_u}{2}$ for all $x \in \mathbb{R}$.

For the logarithmic quantizer, it holds that $q_l(x)x \geq 0$,  for all $x \in \mathbb{R}$,
and the equality holds if and only if $x = 0$.  The quantization error for the logarithmic quantizer is bounded by $|q_l(x) - x| \leq \delta_l |x|$, where the parameter $\delta_l$
is determined by $\delta_l = \text{exp}({\frac{\delta_u}{2}})-1$ (note that $\delta_l > 0$ because $\delta_u >0$).

The above definitions for scalar-valued uniform and logarithmic quantizers can be generalized to  vector-valued quantizers for a vector in a component-wise manner.
For an illustration of a logarithmic quantizer function, see Figure \ref{fig:logarithmic_quantizer}. Note that in Section \ref{sec:asymmetric_quantizer} we will further consider an \emph{asymmetric} uniform quantizer, and will provide some comparisons between a \emph{symmetric} uniform quantizer and an \emph{asymmetric} uniform quantizer.

\begin{figure}
  \centering
  \includegraphics[width=70mm]{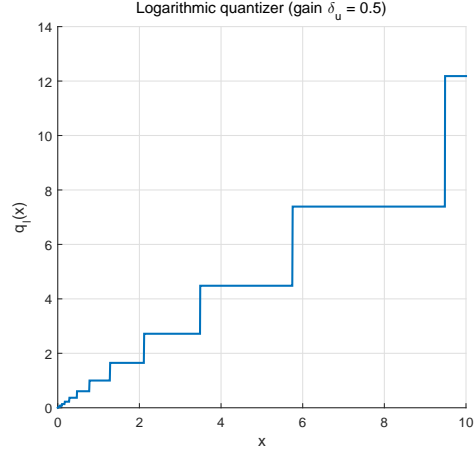}
  \caption{Logarithmic quantizer function with the gain $\delta_u = 0.5$, defined in \eqref{eq:logarithmic_quantizer}. }
  \label{fig:logarithmic_quantizer}
\end{figure}

\subsection{Nonsmooth analysis}
Consider a differential  equation
\begin{align} \label{eq:discontinuous}
\dot x(t) = X(x(t)),
\end{align}
where $X: \mathbb{R}^d \rightarrow \mathbb{R}^d$  is a vector field which is measurable but discontinuous. The existence of a continuously differentiable solution to \eqref{eq:discontinuous} cannot be guaranteed due to the discontinuity of $X(x(t))$. Also, as shown in \cite{ceragioli2011discontinuities}, the   Caratheodory solutions (for definitions, see \cite{cortes2008discontinuous}) may not exist
from a set of initial conditions of measure zero in quantized control systems. Therefore, we understand the solutions to the quantized rigid formation system in the sense of Filippov \cite{filippov1988differential}. We first introduce the Filippov set-valued map.

\begin{definition}
Let $\mathcal{D}(\mathbb{R}^d)$ denote the collection of all subsets of $\mathbb{R}^d$.
The Filippov set-valued map $F[X]: \mathbb{R}^d \rightarrow \mathcal{D}(\mathbb{R}^d)$ is defined by
\begin{align}
\mathcal{F}[X](x) \triangleq \bigcap \limits_{\delta > 0} \bigcap \limits_{\mu ({\cal S})=0} \overline{{\rm co}}\{X(\mathbb{B}(x, \delta)\backslash {\cal S}) \}, \quad x\in \mathbb{R}^d
\end{align}
where $\overline{{\rm co}}$ denotes convex closure,  $\cal S$ is the set of $x$ at which $X(x)$ is discontinuous, $\mathbb{B}(x, \delta)$ is the open ball of radius $\delta$ centered at $x$,   and $\bigcap_{\mu ({\cal S})=0}$ denotes the intersection over all sets $\cal S$ of Lebesgue measure zero.
\end{definition}
  Because of the way the Filippov set-valued map is defined, the value of $\mathcal{F}[X]$ at a discontinuous point $x$ is independent of the value of the vector field $X$ at $x$. Filippov solutions are absolutely continuous  curves that satisfy almost everywhere the differential inclusion $\dot x(t) \in \mathcal{F}[X](x)$ defined above. {\color{black} To keep the paper concise, we do not review further background knowledge on nonsmooth analysis and Filippov solutions. More properties of the Filippov solution and examples of how to compute a Filippov set-valued map can be found in the tutorial paper \cite{cortes2008discontinuous}.  }
\section{Formation control with quantized measurements} \label{sec:convergence_quantizer}
\subsection{Quantized formation controllers}
In rigid formation control, {\color{black} the target formation shape is described by a certain set of distances and all agents aim to stabilize cooperatively their distances to the desired ones. }  If one assumes perfect measurements, a commonly-used formation controller can be written as (see e.g. \cite{anderson2007control, Hector2016maneuvering})
\begin{equation} \label{eq:original_position_system}
\dot{p}_i = -\sum_{k=1}^{|\mathcal{E}|} b_{ik}  (||z_k|| - d_k)\, \hat z_k,
\end{equation}
 where  $d_k$ is the desired distance for edge $k$ which is adjacent with agent $i$, $z_k : = p_j - p_i$ is the relative position vector for edge $k$ that associates agents $i$ and $j$, $\hat z_k : = \frac{z_k}{\|z_k\|}$ is the unit vector denoting the bearing information for edge $k$, and other symbols  appeared in \eqref{eq:original_position_system} have been introduced in Sections \ref{subsec:notations} and \ref{subsec:rigidity}.  
{\color{black}
It is clear from \eqref{eq:original_position_system} that each agent needs to obtain relative position information to its neighbors. More precisely, agents in many, if not most, practical situations do not measure relative positions $z_k$ directly but the distance $||z_k||$ and the direction $\hat z_k$ as independent terms for the control action \eqref{eq:original_position_system}. For example, in ground and aerial robotic applications, inter-agent distances and bearings are measured by different kinds of sensors, e.g., measuring the round trip time of a radio signal for observing distances, and vision cameras or directional antennas for observing bearings. In particular, as we will show in Lemma  \ref{lemma:coordinate_frame}, the relative direction $\hat z_k$ can be given with respect to an agent's local coordinate frame.

Since the inter-agent distances and bearings are so often observed by different sensors, it is reasonable to quantize independently the variables $||z_k||$ and $\hat z_k$ in \eqref{eq:original_position_system}. Consequently, 
 in the presence of quantized sensing, the right-hand side of the control action \eqref{eq:original_position_system} needs modification.
Here we assume that the distance measurement is quantized, and the bearing measurement is unquantized. This assumption is not arbitrary but based on recent results in distance-based formation control. The work in \cite{bishop2015distributed} shows that, in distance-based formation control, accuracy in bearing measurements are not critical. In particular, deviations of up to but less than $\pm$ 90 degrees with respect to the optimal direction, i.e., the one linking two (or more) neighboring agents, do not compromise the stability and convergence of the formation. On the other hand, a distance-based formation is very sensitive to measurement errors, biases or inconsistencies of  an inter-agent distance by its two adjacent agents as  has been rigorously shown in \cite{Mou_Problematic,de2015controlling}. Therefore, we focus our analysis on studying the quantization in sensors that measure the distances and not the bearings, 
i.e., in this paper we consider the control term with quantized distance measurements in the form as $q(||z_k|| - d_k)\, \hat z_k$. In practice, this choice of nonquantization of the bearing is also reasonable because the bearing measurement is always bounded (described by a unit vector, or by an angle in $[-\pi, \pi)$ in the 2-D case). A normal digital sensor, say a 10-bit uniform quantizer, applying to bearing measurements gives rather accurate measurement with very small error to the true bearing, and as discussed in \cite{bishop2015distributed}, this is not an important issue regarding the stability of the formation. This is not the case for distance measurements which may have larger magnitudes.}

\begin{remark}
One may wonder why there is not use of  the quantization feedback in the form of $q(||z_k||)$, i.e., the direct quantized distance measurement,  in the control \eqref{eq:position_system} below.   We note three reasons for choosing $q(||z_k|| - d_k)$ instead of $q(||z_k||)$, {\color{black}from the viewpoint of using quantization as a necessity (arising from limited measurement capabilities), and the advantages that are brought about by adopting such a quantization strategy:}
\begin{itemize}
\item In rigid formation control, the control objective is to stabilize the actual distances between neighboring agents to prescribed values. If one chooses the quantization strategy in the form of $q(||z_k||)$, then this control objective may not be achieved. To this end, the quantization strategy $q(||z_k|| - d_k)$ used in \eqref{eq:position_system} can be interpreted  as   arising from a digital distance sensor with an embedded or prescribed offset (where the offset is the desired distance $d_k$), which is practical in real-world applications.
\item In the case of non-uniform quantizers (e.g., logarithmic quantizer), the quantization accuracy (or resolution) usually increases when the quantizer input approaches closer to the desired state (which is the origin in this case). Thus, when the formation approaches closer to the target formation, a higher quantization accuracy (if possible) is required, and this  cannot be achieved if one uses the quantization function (e.g., logarithmic quantization) on the actual distance in the form of $q(||z_k||)$.   Furthermore, such a quantization is  appealing as a design choice as one can have finite time convergence (as proved in the later section), which is another advantage from the formation convergence viewpoint.
\item  We will further show in Section \ref{sec:binary_quantizer} that  the chosen quantization strategy $q(||z_k|| - d_k)$ will specialize  to a simple and effective quantizer with \emph{coarse binary} distance measurement.   This also brings about the possibility of finite time formation convergence, as will be discussed in Section \ref{sec:binary_quantizer}.
\end{itemize}
\end{remark}

{\color{black}With the above considerations, we rewrite the control action \eqref{eq:original_position_system} as }
\begin{equation} \label{eq:position_system}
\dot{p}_i = -\sum_{k=1}^{|\mathcal{E}|} b_{ik} \, q(||z_k|| - d_k)\, \hat z_k,
\end{equation}
where $q$ is a quantization function, which can be the uniform quantizer or the logarithmic quantizer defined in Section \ref{sec:quantizer_definition}. We also assume that all the agents use the same quantizer $q(\cdot)$, and their initial positions start with non-collocated positions (which ensures $z_k(0) \neq 0$ for all $k$).

In the presence of quantized measurement and feedback, the right-hand side of  \eqref{eq:position_system} is discontinuous and we will consider the following differential inclusion
\begin{align}\label{eq:differential_inclusion_quantization}
\dot{p}_i  &  \in \mathcal{F} \left[ \sum_{k=1}^{|\mathcal{E}|} b_{ik} \, q(||z_k|| - d_k)\, \hat z_k  \right].
\end{align}

  In the following, we define the distance error for edge $k$ as $e_k = ||z_k|| - d_k$.    We then calculate the differential inclusion $\mathcal{F}(q(e_{k}))$ which will be used in later analysis. In the case of a symmetric uniform   quantizer, the  differential inclusion $\mathcal{F}(q_u(e_{k}))$ can be calculated as
     \begin{align}
     \mathcal{F}(q_u(e_{k})) =  \left\{
       \begin{array}{cc}
       h \delta_u,  \;\;&  e_{k} \in \left( (h-\frac{1}{2})\delta_u,  (h + \frac{1}{2})\delta_u  \right), h \in \mathbb{Z};  \\ \nonumber
       [h \delta_u, (h+1) \delta_u],  \;\;&  e_{k} = (h + \frac{1}{2})\delta_u  , h \in \mathbb{Z}.
       \end{array}
      \right.
      \end{align}
Note that $e_{k} \mathcal{F}(q_u(e_{k})) \geq 0$ for all $e_{k}$, and $e_{k} \mathcal{F}(q_u(e_{k})) = 0$ if and only if $e_{k} \in [-\frac{\delta_u}{2}, \frac{\delta_u}{2}]$.  We refer the reader to Section 2.3 for the definition and meaning of notations such as $\delta_u$ and $h$.

In the case of a logarithmic quantizer, the  differential inclusion $\mathcal{F}(q_l(e_{k}))$  can be calculated as
     \begin{align}
     \mathcal{F}(q_l(e_{k})) =  \left\{
       \begin{array}{cc}
       \text{sign}(e_{k}) \text{exp}({q_u(\text{ln}|e_{k}|)}),     \;\;&  e_{k} \neq e^{(h + \frac{1}{2})\delta_u}  , h \in \mathbb{Z};  \\ \nonumber
       [\text{exp}({h \delta_u}), \text{exp}({(h+1) \delta_u})],  \;\;&  e_{k} = e^{(h + \frac{1}{2})\delta_u}, h \in \mathbb{Z}.
       \end{array}
      \right.
      \end{align}
Also note that $e_{k} \mathcal{F}(q_l(e_{k})) > 0$ for all $e_{k} \neq 0$, and   $e_{k} \mathcal{F}(q_l(e_{k})) = 0$ if and only if $e_{k}  = 0$.

We define the distance error vector as $e = [e_1, e_2, \dots, e_m]^{\top}$. Then in a compact form, one can rewrite   the dynamics of \eqref{eq:differential_inclusion_quantization} as
\begin{align} \label{eq:compact_position_system}
\dot p &\in \mathcal{F}\Big[- \overline BD_{\hat z} \, q\Big(e(\mathop{\text{col}}\limits_{k}\big(\|z_k\|\big)\Big)\Big].
\end{align}
In order not to overload the notation,   here by $\hat z$ we exclusively mean the vector-wise normalization of $z$, therefore $D_{\hat z}$ in the above equation and in the sequel is defined as $D_{\hat z} = \operatorname{diag}\{\hat z_1, ..., \hat z_m\}$. This notation rule will also be applied to $\tilde z$ in the sequel.
Note that the differential inclusion $\mathcal{F}(q(e))$ with the vector $e$ is defined according to the product rule of  Filippov's calculus properties (see \cite{paden1987calculus}).

\begin{example} \label{exp:3D_formation}
We show an example to illustrate the formation control system with quantized measurements. Consider a formation system aiming to achieve a double tetrahedron shape in 3-D space (see Figure \ref{fig:tetrahedron_shape}), which consists
of five agents labeled by $1, 2, 3, 4, 5$ and nine edges. This formation is minimally rigid. For the purpose of writing an oriented incidence matrix, suppose that the nine edges are oriented from $i$
to $j$ just when $i < j$.  Then we can number the edges in the following order: $12, 13, 14, 23, 34, 24, 25, 35, 45$. Thus, the following oriented
incidence matrix for the undirected graph in Figure \ref{fig:tetrahedron_shape}  can be obtained

\begin{figure}
  \centering
  \includegraphics[width=50mm]{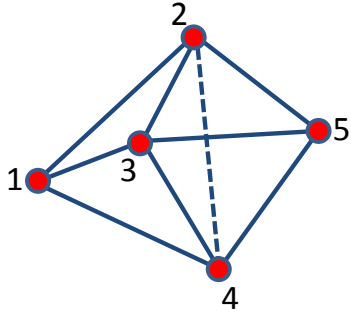}
  \caption{ A 3-D double tetrahedron formation shape, with 5 agents and 9 distances.}
  \label{fig:tetrahedron_shape}
\end{figure}

\begin{equation}
B = \left[
\begin{array}{ccccccccc}
-1 & -1 & -1 & 0 & 0 & 0 & 0 & 0 & 0\\
1 & 0 & 0 & -1 & 0 & -1 & -1 & 0 & 0\\
0 & 1 & 0 & 1 & -1 & 0 & 0 & -1 & 0\\
0 & 0 & 1 & 0 & 1 & 1 & 0 & 0 & -1\\
0 & 0 & 0 & 0 & 0 & 0 & 1 & 1 & 1
\end{array}. \label{incidence_matrix}
\right]
\end{equation}
The composite relative position vector $z$ is then defined according to the orientation for each edge, as in the incidence matrix $B$, in the form of $z = {\overline B}^{\top} p$. As an
example, one has $z_1 =p_2-p_1$, i.e. the vector $z_1$ at edge 1 is defined by the relative position between agent 2 and agent 1. The formation dynamics for agent 1 can be written as 
\begin{align}\label{eq:differential_inclusion_quantization_1}
\dot{p}_1  &  \in \mathcal{F} \left[  q(||z_1|| - d_1)\, \hat z_1 + q(||z_2|| - d_2) \hat z_2 + q(||z_3|| - d_3) \hat z_3 \right].
\end{align}
and similarly one can obtain the system equation for other agents.

By defining the matrix $\overline B$ and $D_{\hat z}$, one can obtain compact equations of system dynamics
in the compact form of \eqref{eq:compact_position_system}. 
\qedsymbol
\end{example}

\subsection{Properties of quantized formation control systems} \label{sec:properties_system}

We first discuss the solution issue of the formation control system \eqref{eq:differential_inclusion_quantization}. However, it is more convenient to focus on the dynamics of the relative position vector $z$, which can be derived from \eqref{eq:differential_inclusion_quantization}  as follows
\begin{align} \label{eq:z_equation_solution}
\dot z = \overline B^{\top} \dot p
\in   \mathcal{F}\Big[- \overline B^{\top} \overline BD_{\hat z} \, q\Big(e(\mathop{\text{col}}\limits_{k}\big(\|z_k\|\big)\Big)\Big].
\end{align}
First note that at any non-collocated finite initial point $p(0)$, the right-hand side of \eqref{eq:differential_inclusion_quantization} and of \eqref{eq:z_equation_solution} is  measurable and locally essentially
bounded.
Thus, the existence of a local Filippov solution of \eqref{eq:differential_inclusion_quantization} and of \eqref{eq:z_equation_solution} starting at such initial points is guaranteed.

We then derive a dynamical system from \eqref{eq:compact_position_system} to describe the evolution of the distance error vector $e$.
According to the definition of the distance error $e_k$, $e_k$ is a smooth function of $z_k$. Thus, by using the calculus property (see \cite{cortes2008discontinuous})  and the set-valued Lie derivative computation theorem (see \cite{bacciotti1999stability}), one can show  $\dot e_k$ exists and 
 $\dot e_k = \frac{1}{\|z_k\|} z_k^{\top} \dot z_k$   holds almost everywhere.  The dynamics for the distance error vector $e$ can be obtained in a compact form as
\begin{align} \label{error system}
\dot e &=   D_{\tilde z} D_{z}^\top \dot z = D_{\tilde z} D_{z}^\top \overline B^{\top} \dot p  = D_{\tilde z} R(z) \dot p,\,\,\,\,  \text{a. e.} \nonumber \\
& \in  - \, \mathcal{F} \left[ D_{\tilde z} R(z) R^{\top}(z) D_{\tilde z} q(e) \right],\,\,\,\,\text{a. e.}
 \end{align}
 A more general compact form of the system equation $\dot e$ can be found in \cite[Section III]{Hector2016maneuvering}.
Again, the existence of a local Filippov solution of \eqref{error system} starting with a  non-collocated finite initial point $p(0)$ is guaranteed.  In the next section, we will also show that the solutions to \eqref{error system} (as well as the solutions  to \eqref{eq:differential_inclusion_quantization} and   \eqref{eq:z_equation_solution}) are \emph{bounded} and can be extended to $t \rightarrow \infty$  when agents' initial positions are chosen non-collocated and close to a target formation shape.  Also, as shown in \cite{Mou_Problematic}, when the formation shape is close to
the desired one, the entries of the matrix $R(z) R^{\top}(z)$ are
continuously differentiable functions of $e$. Since the nonzero
entries of the diagonal matrix $D_{\tilde z}$ are of the form $\frac{1}{\|z_k\|}$
which are also continuously differentiable functions of $e$, we
conclude that the system described in \eqref{error system} is a self-contained
system, and we will call it \emph{the distance error} system in the
sequel.

\begin{example} (Continued)
We again use the formation shape shown in Figure \ref{fig:tetrahedron_shape} to illustrate the derivation of the above system equations. 

According to the definition of the relative position vector $z$ one can derive the compact form of the system dynamic equation for $z$, as shown in \eqref{eq:z_equation_solution}. 
From the construction of the incidence matrix $B$ and the relative position vector $z$, the rigidity matrix can be obtained as
\begin{equation}
R = \left[
\begin{array}{ccccc}
-z_1^{\top} & z_1^{\top} & 0 & 0 & 0\\
-z_2^{\top} & 0 & z_2^{\top} & 0 & 0\\
-z_3^{\top} & 0 & 0 & z_3^{\top} & 0\\
0 & -z_4^{\top} & z_4^{\top} & 0 & 0\\
0 & 0 & -z_5^{\top} & z_5^{\top} & 0\\
0 & -z_6^{\top} & 0 & z_6^{\top} & 0\\
0 & -z_7^{\top} & 0 & 0 & z_7^{\top}\\
0 & 0 & -z_8^{\top} & 0 & z_8^{\top}\\
0 & 0 & 0 & -z_9^{\top} & z_9^{\top}
\end{array} \label{rigidity_matrix}
\right]
\end{equation}

From the expression of the matrix $R(z)$ in \eqref{rigidity_matrix}, it is obvious that the entries of the matrix product $R(z)R^{\top}(z)$ are either zero, or  inner products of relative position vectors in the form of $z_i^{\top} z_j$, which are functions of the distance error vector $e$ (for detailed analysis, see e.g., \cite{Mou_Problematic, sun2016exponential}). Since the   diagonal matrix $D_{\tilde z}$
is defined as $D_{\tilde z} = \operatorname{diag}\{\tilde z_1, ..., \tilde z_9\}$ with $\tilde z_k = \frac{1}{\|z_k\|}$, it is clear that the entries of the matrix $D_{\tilde z}$ are also functions of $e$, and therefore the entries in the matrix product $D_{\tilde z} R(z) R^{\top}(z) D_{\tilde z}$ are functions of $e$. Hence, the distance error system in \eqref{error system} is a self-contained system, for which we can apply the Lyapunov argument to show its stability.  
Note the compact form of the error system \eqref{error system} can be derived by the definition of the distance error $e$.   \qedsymbol
\end{example}

Finally, we show some additional properties of the formation control system with quantized information. Note that through this paper we assume that the underlying graph modelling inter-agent interactions is undirected.

\begin{lemma}
In the presence of the uniform/logarithmic quantizer, the formation centroid remains stationary.
\end{lemma}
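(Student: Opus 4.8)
The plan is to show that the sum of all agents' velocities is zero, so that the centroid $\frac{1}{n}\sum_{i=1}^n p_i$ has vanishing time derivative along any Filippov solution. The natural way to do this is to work with the compact position dynamics \eqref{eq:compact_position_system}, namely $\dot p \in \mathcal{F}\bigl[-\overline B D_{\hat z}\, q(e)\bigr]$, and to exploit the left null space of the incidence matrix. Recall from Section \ref{subsec:rigidity} that for a connected undirected graph $\mathrm{null}(B^{\top}) = \mathrm{span}\{\mathbf{1}_n\}$, equivalently $\mathbf{1}_n^{\top} B = 0$. The centroid velocity is, up to the constant factor $1/n$, the sum $\sum_{i=1}^n \dot p_i$, which in stacked form is $(\mathbf{1}_n^{\top} \otimes I_d)\,\dot p$.

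First I would write the centroid as $\bar p = \frac{1}{n}(\mathbf{1}_n^{\top}\otimes I_d)\,p$ and compute $\dot{\bar p}$ by substituting the differential inclusion for $\dot p$. Using the mixed-product property of the Kronecker product, $(\mathbf{1}_n^{\top}\otimes I_d)\overline B = (\mathbf{1}_n^{\top}\otimes I_d)(B\otimes I_d) = (\mathbf{1}_n^{\top} B)\otimes I_d = 0$, since $\mathbf{1}_n^{\top} B = 0$. Hence for every element $v$ of the set-valued map we have $(\mathbf{1}_n^{\top}\otimes I_d)\bigl(-\overline B D_{\hat z}\, q(e)\bigr)\cdot(\text{scalar selection})=0$, giving $\dot{\bar p}=0$ almost everywhere, so the centroid is stationary. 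This argument is independent of which quantizer $q$ is used, since $q(e)$ enters only as a right factor and is annihilated before it is ever evaluated; this explains why the single lemma covers both the uniform and logarithmic cases simultaneously.

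The one point requiring care is the passage through the Filippov set-valued map, since $\dot p$ is only defined as a differential inclusion and not pointwise. Here I would invoke the calculus properties of Filippov solutions: left-multiplication of a set-valued map by a constant (i.e. state-independent) linear map $A$ satisfies $A\,\mathcal{F}[X](x) = \mathcal{F}[A X](x)$ when $A$ has full row rank, and more elementarily, since every element of $\mathcal{F}[-\overline B D_{\hat z} q(e)]$ is a convex combination of limiting values of $-\overline B D_{\hat z} q(e)$, each of which is annihilated by $\mathbf{1}_n^{\top}\otimes I_d$, the image under this linear map is the singleton $\{0\}$. Thus $\dot{\bar p}\in\{0\}$ and the centroid stays fixed along every Filippov solution.

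I expect the main obstacle to be the rigorous handling of this last step rather than the algebra: one must argue that the constant left factor $\mathbf{1}_n^{\top}\otimes I_d$ commutes with the closed-convex-hull and measure-zero-intersection operations defining $\mathcal{F}$, so that annihilation at the level of the discontinuous vector field $-\overline B D_{\hat z} q(e)$ transfers to the set-valued map. Since the annihilating factor is constant and the image of a single vector $\{0\}$ is trivially closed and convex, this commutation is straightforward, and the bulk of the proof reduces to the Kronecker-product identity $(\mathbf{1}_n^{\top} B)\otimes I_d = 0$.
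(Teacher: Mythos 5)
Your proposal is correct and follows essentially the same route as the paper: both proofs reduce to the identity $(\mathbf{1}_n^{\top}\otimes I_d)\overline B = (\mathbf{1}_n^{\top}B)\otimes I_d = 0$ (the paper states it as $(\mathbf{1}_n \otimes I_{d})^{\top} R^{\top}(z) = \mathbf{0}$, with $R^{\top}(z)D_{\tilde z} = \overline B D_{\hat z}$), and both note that this constant annihilating factor passes through the Filippov set-valued map so that $\dot p_c = 0$ almost everywhere, independently of which quantizer is used. Your extra care about commuting the constant linear map with the convex-closure construction is a slightly more explicit justification of the step the paper handles by citing the set-valued Lie derivative calculus, but it is the same argument.
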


\begin{proof}
Denote by $p_c \in \mathbb{R}^d$ the center of the mass of the formation, i.e., $p_c = \frac{1}{n} \sum_{i=1}^n p_i = \frac{1}{n}(\mathbf{1}_n \otimes I_{d \times d})^{\top} p$.
By applying  the calculus property for the set-valued Lie derivative (see \cite{cortes2008discontinuous} or \cite{bacciotti1999stability}), one has
\begin{align}
\dot p_c(t) = & \frac{1}{n}(\mathbf{1}_n \otimes I_{d \times d})^{\top} \dot p  \nonumber \\
\in & -\frac{1}{n}(\mathbf{1}_n \otimes I_{d \times d})^{\top} R^{\top}(z) D_{\tilde z}  \mathcal{F}[q(e(z))]\,\,\, \text{for a.e. } t.
\end{align}
Note that $(\mathbf{1}_n \otimes I_{d \times d})^{\top} R^{\top}(z) = \mathbf{0}$. Therefore,
\begin{align}
\dot p_c(t)  \in & -\frac{1}{n}(\mathbf{1}_n \otimes I_{d \times d})^{\top} R^{\top}(z) D_{\tilde z}  \mathcal{F}[q(e(z))] = \{0\}\, \text{for a.e. } t.
\end{align}
Thus $\dot p_c = 0$ for a.e.  $t$, which indicates that the position of the formation centroid remains constant.
\end{proof}

\begin{lemma} \label{lemma:coordinate_frame}
To implement the control, each   agent can use its own local coordinate system to measure the   relative position (quantized distance and unquantized bearing) of its neighbors, and a global coordinate system is not involved.
\end{lemma}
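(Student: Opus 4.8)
The plan is to exploit the rotational equivariance of the gradient control law \eqref{eq:position_system}: its right-hand side is assembled entirely from quantities that are either invariant under a change of orthonormal frame (the scalar distance errors) or that transform covariantly under such a change (the bearing directions). Consequently the velocity each agent computes in its own frame represents one and the same geometric vector, irrespective of which frame that agent uses, and no agreement between agents on a common frame is ever needed.

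First I would formalize what a local frame is. For each agent $i$, let its local coordinate frame be related to a fictitious global frame by an orthogonal rotation $R_i \in \mathrm{SO}(d)$, so that a geometric vector $v$ with global representation $v^{g}$ has local representation $v^{(i)} = R_i^{\top} v^{g}$. Because only relative positions enter \eqref{eq:position_system}, any translation between the frames is irrelevant and can be discarded; moreover, since $b_{ik}=0$ whenever edge $k$ is not adjacent to $i$, agent $i$ only ever needs readings toward its own neighbors.

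Next I would inspect each factor of the term $b_{ik}\, q(\|z_k\| - d_k)\, \hat z_k$ that agent $i$ actuates. The entry $b_{ik}$ and the prescribed distance $d_k$ are real scalars fixed a priori; the distance $\|z_k\|$ is a Euclidean norm and, since $R_i$ is orthogonal, is identical whether the relative position is expressed locally or globally, so the quantized error $q(\|z_k\| - d_k)$ is a frame-independent scalar computable from agent $i$'s own (possibly quantized) distance reading. Only the bearing is a genuine direction, and in agent $i$'s own frame it reads $\hat z_k^{(i)} = R_i^{\top}\hat z_k^{g}$. Assembling the locally computed velocity from these locally measured quantities gives
\begin{align}
\dot p_i^{(i)} = -\sum_{k=1}^{|\mathcal{E}|} b_{ik}\, q(\|z_k\| - d_k)\, \hat z_k^{(i)} = R_i^{\top}\!\left( -\sum_{k=1}^{|\mathcal{E}|} b_{ik}\, q(\|z_k\| - d_k)\, \hat z_k^{g} \right) = R_i^{\top}\, \dot p_i^{g},
\end{align}
where the invariant scalars $q(\|z_k\|-d_k)$ pass through $R_i^{\top}$ precisely because they do not depend on the frame. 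Hence the velocity agent $i$ forms in its own frame is exactly the global prescription $\dot p_i^{g}$ written in that frame; since the agent's actuation is likewise referred to its local frame, the physical motion produced coincides with that of \eqref{eq:position_system}. As $R_i$ was arbitrary and chosen independently for each $i$, no common frame is implied.

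The main obstacle is conceptual rather than computational: one must argue cleanly that the object each agent integrates is a coordinate-free geometric vector—a scalar-weighted sum of measured bearing directions—rather than a representation presupposing a shared frame, and that ``moving according to the locally computed velocity'' therefore yields the same trajectory as the global law for every independent choice of the $R_i$. Once the equivariance identity above is in place this step is immediate. I would also remark that quantization plays no special role here, since $q(\cdot)$ acts only on the already frame-invariant scalar distance error.
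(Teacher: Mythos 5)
Your proposal is correct and follows essentially the same route as the paper: both arguments rest on the observation that the quantized distance error $q(\|z_k\|-d_k)$ is a rotation- and translation-invariant scalar while the bearing $\hat z_k$ transforms covariantly, so the rotation factors out of the sum and the control law is \textit{SE(N)}-equivariant. The only difference is presentational: the paper verifies equivariance under a single transformation $(R,\omega)$ applied to all positions, whereas you carry out the same computation with an independently chosen rotation $R_i$ for each agent, which makes the conclusion that no common frame is needed slightly more explicit.
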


\begin{proof}
The key part in the proof of local coordinate requirement is to show that the control function for all the agents is an \textit{SE(N)}-invariant function {\footnote{A function $f$ is said to be \textit{SE(N)}-invariant if for
all $R \in  SO(N)$ and all $x_1, \ldots, x_n, \omega \in  \mathbb{R}^N$, there holds 
$Rf(x_1,\ldots, x_n) = f(Rx_1 + \omega,\ldots, Rx_n +  \omega)$. }}.  
The control function for agent $i$ is $f_i = \sum_{k=1}^{|\mathcal{E}|} b_{ik} \, q(||z_k|| - d_k)\, \hat z_k$. Given an arbitrary coordination rotation $R \in  SO(N)$ and  displacement of origin  $\omega \in  \mathbb{R}^N$, {there holds \color{black} $\|Rp_i + \omega - (R p_j +\omega)\| = \|Rp_i - Rp_j\| = \|p_i - p_j\|$ and therefore $q(||z_k|| - d_k)$ is invariant under any action of rotation $R$ and translation $\omega$.   }
Thus, one has
\begin{align}
f_i(Rp_1 + \omega,\ldots, Rp_n +  \omega) &= \sum_{k=1}^{|\mathcal{E}|} b_{ik} \, q(||z_k|| - d_k)\, R \, \hat z_k \nonumber \\
&= R \sum_{k=1}^{|\mathcal{E}|} b_{ik} \, q(||z_k|| - d_k)\,  \hat z_k \nonumber \\
&= R f_i (p_1, \ldots, p_n)
\end{align}
and the statement is proved. 
\end{proof}
 
Note that the above lemma implies  the $SE(N)$ invariance (i.e., translational and rotational invariance) \cite{vasile2017translational} of the proposed formation controller, which enables a convenient implementation of the quantized formation control law without coordinate frame alignment for all of the agents. We refer the readers to \cite{vasile2017translational} for a general treatment on coordinate frame issues in networked control systems.

\subsection{Convergence analysis}
In this section we aim to prove  the following convergence result.

\begin{theorem} \label{theorem:convergence_quantization}
Suppose the target formation is infinitesimally
and minimally rigid and the  formation controller with quantized measurement  is applied.
\begin{itemize}
\item In the case of a uniform quantizer, the formation converges locally to an approximately correct and static shape defined by the set $F_{\text{approx}} = \{e | e_{k} \in [-\frac{\delta_u}{2}, \frac{\delta_u}{2}], k \in \{1, \dots, |\mathcal{E}|\}\}$;
\item In the case of a logarithmic quantizer,   the formation  converges locally to a correct and static  formation shape.
\end{itemize}
\end{theorem}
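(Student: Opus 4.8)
The plan is to reduce everything to the self-contained distance-error system and to exhibit a single nonsmooth Lyapunov function that handles both quantizers at once. Writing $E := \overline{B} D_{\hat z}$ and using $D_{\tilde z} D_z^{\top} = D_{\hat z}^{\top}$, the matrix appearing in \eqref{error system} simplifies to $D_{\tilde z} R(z) R^{\top}(z) D_{\tilde z} = E^{\top} E =: M(e)$, so the error dynamics read $\dot e \in -\,\mathcal{F}[\,M(e)\,q(e)\,]$. Because the target is minimally and infinitesimally rigid, $R$ has full row rank $|\mathcal{E}|$ there, hence $E = R^{\top} D_{\tilde z}$ has full column rank and $M$ is positive definite at the target; by continuity $M(e) \succ 0$ on a neighborhood $\mathcal{U}$ of $e=0$. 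I restrict attention to initial conditions inside $\mathcal{U}$, which is exactly the local setting assumed in the theorem.

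Next I would introduce the potential
\begin{equation}
W(e) = \sum_{k=1}^{|\mathcal{E}|} \int_0^{e_k} q(s)\, ds .
\end{equation}
Since both $q_u$ and $q_l$ are odd, nondecreasing, and sign-preserving, each summand is convex and nonnegative, so $W$ is a nonnegative, locally Lipschitz, convex (hence regular) function whose generalized gradient is the componentwise Filippov map, $\partial W(e) = \mathcal{F}[q](e)$. Evaluating the set-valued Lie derivative of $W$ along $\dot e \in -\mathcal{F}[M q(e)]$ and using $\partial W = \mathcal{F}[q]$ gives, for every admissible value, $\dot W \le -\,\zeta^{\top} M(e)\, \zeta$ with $\zeta \in \mathcal{F}[q](e)$; thus $\widetilde{\mathcal{L}}_{\mathcal{F}} W(e) \subseteq (-\infty,0]$. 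The point of choosing $W$ rather than $\tfrac12\|e\|^2$ is precisely that its gradient is $q(e)$, which cancels the quantizer mismatch that would otherwise leave the sign of $-e^{\top} M q(e)$ (the derivative of $\tfrac12\|e\|^2$) ambiguous.

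With $\dot W \le 0$ in hand I would carry out the standard localization: choose $e(0)$ small enough that the connected component of the sublevel set $\{W \le W(e(0))\}$ containing the origin is a compact subset of $\mathcal{U}$. Monotonicity of $W$ makes this set strongly invariant, which bounds the solutions, keeps $z_k \neq 0$, and lets them be extended to $t \to \infty$. The nonsmooth LaSalle invariance principle (see \cite{cortes2008discontinuous, bacciotti1999stability}) then forces every solution to converge to the largest weakly invariant set contained in $\{e \in \mathcal{U} : 0 \in \widetilde{\mathcal{L}}_{\mathcal{F}} W(e)\}$. Because $M(e) \succ 0$ on $\mathcal{U}$, the condition $0 \in \widetilde{\mathcal{L}}_{\mathcal{F}} W(e)$ forces $0 \in \mathcal{F}[q](e)$, i.e. $0 \in \mathcal{F}(q(e_k))$ for every edge $k$.

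It remains only to read off the two cases from the quantizer definitions. For the uniform quantizer $0 \in \mathcal{F}(q_u(e_k))$ exactly when $e_k \in [-\tfrac{\delta_u}{2}, \tfrac{\delta_u}{2}]$, so the limit set is precisely $F_{\text{approx}}$; moreover at every such $e$ one has $q(e)=0$, hence $\dot p \in \mathcal{F}[0] = \{0\}$, so these are equilibria and the formation is asymptotically static and approximately correct. For the logarithmic quantizer $0 \in \mathcal{F}(q_l(e_k))$ only when $e_k = 0$, so the invariant set collapses to $\{e=0\}$ and $e(t) \to 0$, giving an exactly correct static shape; note that no smallness condition on the resolution $\delta_l$ is needed, in contrast to an ISS-type argument based on $\tfrac12\|e\|^2$. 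I expect the main obstacle to be the rigorous nonsmooth bookkeeping: verifying $\partial W = \mathcal{F}[q]$ and that $\widetilde{\mathcal{L}}_{\mathcal{F}} W \le 0$ as the trajectory crosses the quantization surfaces, checking the hypotheses of the nonsmooth invariance principle, and confirming that local invariance keeps the trajectory in the region where the rigidity (full-column-rank) property, and hence $M \succ 0$, is preserved.
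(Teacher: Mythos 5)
Your proposal is correct and follows essentially the same route as the paper's proof: the same quantizer-integral Lyapunov function $V(e)=\sum_k\int_0^{e_k}q(s)\,ds$, the same localization to a compact sublevel set on which $Q(e)=D_{\tilde z}R(z)R^{\top}(z)D_{\tilde z}$ is positive definite, the same set-valued Lie derivative bound, and the same application of the nonsmooth invariance principle with the two limit sets read off from $0\in\mathcal{F}[q](e)$ (your treatment of this last implication via $\zeta=\eta$ is in fact slightly cleaner than the paper's). The only substantive difference is that the paper's proof additionally establishes \emph{finite-time} convergence to $F_{\text{approx}}$ in the uniform case, which your argument does not attempt, but that claim is beyond what the theorem statement requires.
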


In the proof we will use the  Lyapunov theory of nonsmooth analysis, for which we construct a Lyapunov function candidate as
\begin{align} \label{eq:lyapunov_quantization}
V(e) = \sum_{k=1}^{|\mathcal{E}|} V_k(e_k), \,\, \text{with}\,\,V_k(e_k) = \int_{0}^{e_k} q(s) \text{d} s.
\end{align}
Before giving the proof of Theorem \ref{theorem:convergence_quantization}, we first show some properties of the function $V$ defined in \eqref{eq:lyapunov_quantization}. For the definition of \emph{function regularity} in nonsmooth analysis, see e.g. \cite[Chapter 2]{clarke1998nonsmooth} or \cite[Page 57]{cortes2008discontinuous}.
\begin{lemma} \label{lemma:regularity}
The  function $V$ constructed in \eqref{eq:lyapunov_quantization} is positive semidefinite, and is regular everywhere.
\end{lemma}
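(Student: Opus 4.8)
The plan is to handle the two claimed properties separately, since they rely on different features of the quantizer function $q$. For positive semidefiniteness I would exploit the sign properties of $q$ recorded just above the lemma, namely that $s\,q(s) \geq 0$ for both the uniform and logarithmic quantizers, so that $q(s) \geq 0$ whenever $s \geq 0$ and $q(s) \leq 0$ whenever $s \leq 0$. Consequently, for $e_k \geq 0$ the integrand in $V_k(e_k) = \int_{0}^{e_k} q(s)\,\text{d}s$ is nonnegative over the interval of integration, while for $e_k < 0$ the integral equals $-\int_{e_k}^{0} q(s)\,\text{d}s$ with a nonpositive integrand, so in either case $V_k(e_k) \geq 0$. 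Summing over $k$ gives $V(e) = \sum_k V_k(e_k) \geq 0$. I would also characterize the zero set: for the uniform quantizer $q_u$ vanishes on $(-\tfrac{\delta_u}{2}, \tfrac{\delta_u}{2})$, so $V_k(e_k) = 0$ exactly when $e_k \in [-\tfrac{\delta_u}{2}, \tfrac{\delta_u}{2}]$, which is precisely why $V$ is only semidefinite and not definite; for the logarithmic quantizer $s\,q_l(s) = 0$ only at $s = 0$, so there $V_k(e_k) = 0$ iff $e_k = 0$.

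For regularity, the key observation I would make is that each quantizer $q$ is a \emph{nondecreasing} function of its scalar argument. This is immediate for $q_u$ from the monotonicity of the rounding operation, and for $q_l$ it follows by composing the nondecreasing map $q_u$ with the increasing functions $\ln$ and $\exp$ on each half-line and checking the sign bookkeeping across the origin. Since $V_k$ is the primitive of a nondecreasing function, $V_k$ is \emph{convex} on $\mathbb{R}$. Convex functions are locally Lipschitz and regular in the sense of Clarke, and a finite sum of regular functions is again regular; I would invoke the standard results in \cite{clarke1998nonsmooth} or \cite{cortes2008discontinuous} for both facts. Applying these to $V = \sum_k V_k$ yields regularity everywhere.

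The routine calculations—the sign analysis of the integral and the monotonicity check for $q_l$ across the origin—are straightforward. The main conceptual point I would emphasize is that the \emph{discontinuity} of $q$ does not obstruct regularity of $V$, precisely because the relevant primitive $V_k$ is convex and hence automatically regular even though $q$ itself is discontinuous; this sidesteps any direct computation of Clarke generalized gradients. The only subtlety to verify is that the sum rule for regularity applies here without qualification, which it does since each summand $V_k$ is finite-valued and locally Lipschitz on all of $\mathbb{R}$.
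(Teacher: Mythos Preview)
Your proposal is correct and follows essentially the same line as the paper: positive semidefiniteness from the sign property of $q$, and regularity from convexity. The paper argues the latter more informally (noting that $V$ is $C^1$ almost everywhere with ``corners of convex type'' at the nondifferentiable points and citing \cite{clarke2013functional} and \cite{liu2012quantization}), whereas you establish global convexity of each $V_k$ directly from the monotonicity of $q$ and then invoke the sum rule for regular functions; your version is cleaner and more self-contained, but the underlying idea is the same.
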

\begin{proof}
The positive semidefiniteness of $V$ is obvious from the property of the quantization functions $q_u$ and $q_l$. Note that $V(e) = 0$ if and only if  $e \in \{e | e_{k} \in [-\frac{\delta_u}{2}, \frac{\delta_u}{2}], k \in \{1, \dots, |\mathcal{E}|\}\}$ for a uniform quantizer $q_u$, or when $e =0$ for a logarithmic quantizer $q_l$. The proof for the regularity is  omitted here but follows similarly to the proof of the previous paper \cite[Lemma 6]{liu2012quantization}. We note a key fact that supports the regularity statement of $V$: $V$ is  continuously differentiable almost everywhere, while at the nondifferentiable points $V$ has corners of \emph{convex} type. From the sufficient condition of regular functions stated in \cite[Page 200]{clarke2013functional}, \footnote{``Roughly speaking, we can think of regular functions as those that, at each point, are either smooth, or else have a corner of convex type'' \cite[Page 200]{clarke2013functional}.} this key fact implies that $V$ is regular everywhere.
\end{proof}

Furthermore, according to the definition of generalized derivative (see e.g.
\cite[Chapter 2]{clarke1998nonsmooth}), one can calculate the generalized derivative of $V_k$ (for the case of a uniform quantizer) as
\begin{align}
\partial V_k = \left\{
\begin{array}{cc}
       [h \delta_u, (h+1) \delta_u],  &   e_{k} = (h + \frac{1}{2})\delta_u  , h \in \mathbb{Z}    \\ \nonumber
       q(e_k),  &   \text{elsewise}   \\
       \end{array}
      \right.
\end{align}
Similarly, one can also calculate the generalized derivative of $V_k(e_k)$ for the case of a logarithmic quantizer (which is omitted here).  The  generalized derivative of $V(e)$ can be obtained by the product rule (see \cite[Page 50]{cortes2008discontinuous}).  Now we are ready to prove Theorem \ref{theorem:convergence_quantization}.

\begin{proof}
We choose the Lyapunov function constructed in \eqref{eq:lyapunov_quantization} for the distance error system \eqref{error system} with discontinuous right-hand side.
Note that $R(z)R^{\top}(z)$ and $D_{\tilde z}$ are positive definite matrices at the
desired formation shape.
Similarly to the analysis in   \cite{sun2016exponential} and in \cite{Hector2016maneuvering}, we  define a sub-level  set $\mathcal{B}(\rho)= \{e: V(e) \leq \rho\}$ for some suitably small  $\rho$,    such that when $e \in \mathcal{B}(\rho)$ the  formation is infinitesimally minimally rigid and the initial formation shape is close enough to the prescribed shape (which implies that  inter-agent collisions cannot be possible). Note that all these imply that $R(z)R^{\top}(z)$ and $D_{\tilde z}$ are \emph{positive definite} when $e \in \mathcal{B}(\rho)$. Note also that the defined sub-level  set $\mathcal{B}(\rho)$ is compact, and the matrix $Q(e) :=D_{\tilde z} R(z)R^{\top}(z) D_{\tilde z}$ is also \emph{positive definite} when $e \in \mathcal{B}(\rho)$.   As a consequence, in the following we rewrite the distance error system as
$\dot e \in \mathcal{F} \left[ -Q(e) q(e)\right]$.

The  regularity of $V$ shown in Lemma \ref{lemma:regularity} allows us to employ the nonsmooth Lyapunov theorem \cite[Section 2]{bacciotti1999stability} to develop the stability analysis.
We calculate the set-valued derivative of $V$ along the trajectory of the distance error system \eqref{error system}.  By applying the calculation rule for the set-valued derivative (see \cite[Pages 62-63]{cortes2008discontinuous}), one can obtain

\begin{align}
\dot V(e)_{\eqref{error system}} & \in \tilde{\mathcal{L}}_{\eqref{error system}}V(e)  = \{  a \in \mathbb{R}| \exists v \in \dot e_{\eqref{error system}}, \,\,\, \nonumber \\
 & \text{such that}\,\,\, \zeta^{\top} v= a, \forall \zeta \in \partial V(e) \}.
\end{align}
Note that the set $\tilde{\mathcal{L}}_{\eqref{error system}}V(e)$ could be empty, and in this case we adopt the convention that $\text{max} (\emptyset) = - \infty$.
When it is not empty, there exists   $v \in -Q(e) q(e)$ such that $\zeta^{\top} v= a$ for all $\zeta \in \partial V(e)$. A natural choice of $v$ is to set $v \in -Q(e) \zeta$, with which one can obtain $a = -q^{\top}(e)Q(e)q(e)$.

Let $\bar \lambda_{\text{min}}$ denote the smallest eigenvalue of $Q(e)$ when $e(p)$ is in the compact set $\mathcal{B} $ (i.e. $\bar \lambda_{\text{min}}  =  \mathop {\min }\limits_{e \in \mathcal{B}} \lambda  (Q(e)) >0$).   Note that $\bar \lambda_{\text{min}}$ exists because the set $\mathcal{B}$ is a compact set
and the eigenvalues of a matrix are continuous functions of the matrix elements, and $\bar \lambda_{\text{min}}>0$ because $Q(e)$ is \emph{positive definite} with $e \in \mathcal{B}(\rho)$ as mentioned above.  Then if the set   $\tilde{\mathcal{L}}_{\eqref{error system}}V(e)$ is not empty, one can show
\begin{align} \label{eq:Lya_derivative1}
\text{max}(\tilde{\mathcal{L}}_{\eqref{error system}}V(e))   \leq    - \bar \lambda_{\text{min}} q(e)^{\top}  q(e)
\end{align}
and if the set $\tilde{\mathcal{L}}_{\eqref{error system}}V(e)$ is empty, one has $\text{max}(\tilde{\mathcal{L}}_{\eqref{error system}}V(e)) = - \infty$. Note that both cases imply that $V$ is non-increasing, and consequently the Filippov solution $e(t)$ of \eqref{error system} is bounded. Thus, all solutions to \eqref{error system} (as well as the solutions to \eqref{eq:differential_inclusion_quantization}) are bounded and can be extended to $t = \infty$ (i.e., there is no finite escape time).

We now divide the rest of the proof into two parts, according to different quantizers:
\begin{itemize}
\item The case of uniform quantizers: it can be seen that $\text{max}(\tilde{\mathcal{L}}_{\eqref{error system}}V(e)) \leq 0$  for all $e \in \mathcal{B}(\rho)$ and
$0 \in \text{max}(\tilde{\mathcal{L}}_{\eqref{error system}}V(e))$ if and only if $e \in F_{\text{approx}}$. Also note that $F_{\text{approx}}$ is compact, and is positively invariant for the distance error system \eqref{error system} (i.e. if the initial formation is such that $e(0)  \in F_{\text{approx}}$, then all agents are static and $e(t)  \in F_{\text{approx}}$ for all $t$).  According to the nonsmooth invariance principle \cite[Theorem 3]{bacciotti1999stability}, the first part of the convergence result is proved. Since this is a convergence to a \emph{closed and bounded} set $F_{\text{approx}}$ (i.e., a compact set), and outside this set the set-valued derivative of $V$ along the trajectory of the distance
error system is always negative (i.e., $\text{max}(\tilde{\mathcal{L}}_{\eqref{error system}}V(e)) < 0$ for $e \in \mathcal{B}(\rho) \setminus F_{\text{approx}}$) while $\mathcal{B}(\rho)$ is also a compact set, the convergence to $F_{\text{approx}}$ is achieved within a \emph{finite time}.   Note also from \eqref{eq:position_system} the final formation is stationary because $\dot p(t) = 0$ for $e(t)  \in F_{\text{approx}}$.
\item The case of logarithmic  quantizers: it can be seen that $\text{max}(\tilde{\mathcal{L}}_{\eqref{error system}}V(e)) \leq 0$ for all $e \in \mathcal{B}(\rho)$ and
$0 \in \text{max}(\tilde{\mathcal{L}}_{\eqref{error system}}V(e))$ if and only if $e  = 0$. According to the nonsmooth invariance principle \cite[Theorem 3]{bacciotti1999stability}, the second part of the convergence result is proved. Also note from  \eqref{eq:position_system} the final formation is stationary.
\end{itemize}
The proof is thus completed.
\end{proof}

\begin{remark}(\textbf{Finite time convergence to a compact set})
In the above we have shown the trajectories of distance errors in the formation system under uniform quantization  converge to a bounded and closed set $F_{\text{approx}}$ within a finite time, the size of which also depends on the uniform quantizer errors. The key recipes to guarantee the finite time convergence are the following: (i) the set $F_{\text{approx}}$ and the sublevel sets $V(e)$ are compact sets; (ii) Outside the set $F_{\text{approx}}$  the set-valued derivative of $V$ along the trajectories of the distance error system is always negative; and (iii) the function $V(e)$ is a   strictly increasing function of $e$. An intuitive illustration of the finite time convergence of distance error trajectories  to the set $F_{\text{approx}}$ is shown in Figure \ref{figure:finite_time_set}.

\begin{figure} 
\centering
\includegraphics[width=75mm]{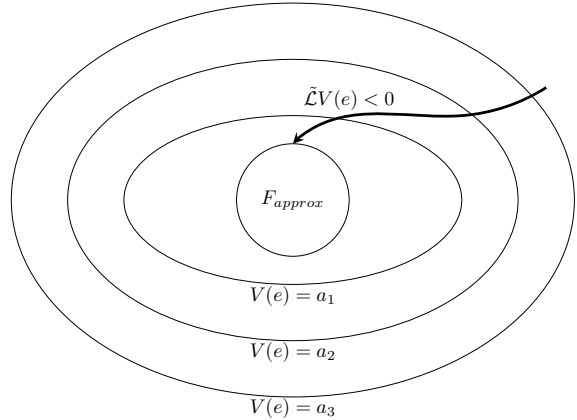}   
\caption{\small Illustration of finite time convergence to a compact set $F_{\text{approx}(e)}$ centered at $e = 0$. Outside this set $F_{\text{approx}}$  the set-valued derivative of $V$ along the trajectory of the distance
error system is always negative. In the figure, three level sets of $V(e)$ with $a_3 >a_2 >a_1$ are shown, which are compact sets with respect to $e$. Note as also shown in the figure $V(e)$ is a positive definite and strictly increasing function of $e$. }  \label{figure:finite_time_set}
\end{figure} 

\end{remark}

\begin{remark} \label{re:fixed_point}
We now show a stronger convergence result (i.e., convergence to a point in the set) in addition to the finite time convergence in the case of uniform quantizers \eqref{eq:sym_uniform_quantizer}. We observe that a sufficient condition for the position $p_i$ of agent $i$ to converge to a fixed point is that $\int_{0}^{\infty}\dot p_i(t)\text{d}t  < \infty$, which is true since (i) initially all agents are at finite positions (i.e., $p_i(0) < \infty$), and (ii) all $\dot p_i(t)$ (associated with the control input) are upper bounded and converge to the origin in \emph{finite time}. By the integration law this implies $p_i(t)_{t >T}$ is constant at a fixed position when $e(t)_{t >T} \in F_{\text{approx}}$ where $T$ is the finite settling time of convergence, which further implies that the distance error $e(t)$ converges to a fixed point in the set $F_{\text{approx}}$.
\end{remark}

\section{A special quantizer: formation control with   binary distance information} \label{sec:binary_quantizer}
\subsection{Rigid formation control with coarse distance measurements}

In this section we consider the special case in which each agent uses very coarse distance measurements, in the sense that   it only needs to detect whether the current distance to each of its neighbors is greater or smaller than the desired distance. This gives rise to a special quantizer defined by the following \emph{signum} function:
\begin{equation} \label{eq:definition_sign}
     \text{sign}(x) =  \left\{
       \begin{array}{cc}
       1  \;\;&\text{when} \,\, x>0;  \\ \nonumber
       0                    \;\;& \text{when} \,\, x=0;  \\ \nonumber
       -1 \;\;&\text{when} \,\, x<0.
       \end{array}
      \right.
      \end{equation}
Accordingly, we obtain the following   rigid formation control system with \emph{binary distance measurements}:
\begin{equation}\label{eq:position_sign2}
\dot{p}_i=  -\sum_{k=1}^{|\mathcal{E}|} b_{ik}  \text{sign}(||z_k|| - d_k)\, \hat z_k.
\end{equation}

\begin{remark}
Formation control using the signum function has been discussed in several previous papers. In \cite{zhao2014finite}, a finite-time convergence was established for stabilization of cyclic formations using binary bearing-only measurements.
A \textit{linear-consensus-based} formation control with coarsely quantized measurements was discussed in  \cite{Jafarian2015125}, while the implementation of the controller requires each agent to have knowledge of the global coordinate frame orientation. {\color{black} To be specific, the formation control law proposed in \cite{Jafarian2015125} is described by 
\begin{align}
u_i = \sum_{k=1}^{|\mathcal{E}|} b_{ik}\,\, \text{sign}(z_k - z_k^*),
\end{align}
where $z_k^* \in \mathbb{R}^d$ is the desired relative position for edge $k$, and the $\text{sign}(\cdot)$ function operates on each element of the $d$-dimensional vector $z_k - z_k^*$. In order to interpret the desired relative position vector $z_k^*$, all agents need to agree on a common orientation in their coordinate frames. In contrast, the proposed control law with coarse distance measurements in the form of \eqref{eq:position_sign2} does not require an orientation alignment of the agents' local coordinate frames. This is summarized in the following lemma. }
\end{remark}
{\color{black}
\begin{lemma}
To implement the control of \eqref{eq:position_sign2}, each  agent can use its own local coordinate system to measure the binary distances and bearings with respect to its neighbors, and a global coordinate system is not involved.
\end{lemma}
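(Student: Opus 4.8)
The plan is to mirror the argument used in the proof of Lemma \ref{lemma:coordinate_frame}, namely to establish that the control function for each agent is an $SE(N)$-invariant function, from which the absence of any global coordinate requirement follows immediately. Writing the control function for agent $i$ as $f_i = \sum_{k=1}^{|\mathcal{E}|} b_{ik} \, \text{sign}(\|z_k\| - d_k)\, \hat z_k$, I would verify the defining invariance identity $R f_i(p_1,\ldots,p_n) = f_i(Rp_1 + \omega, \ldots, Rp_n + \omega)$ for an arbitrary rotation $R \in SO(N)$ and displacement $\omega \in \mathbb{R}^N$.

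First I would observe that under the rigid-body transformation $p_j \mapsto Rp_j + \omega$, the relative position vector becomes $R(p_j - p_i)$, so the distance $\|z_k\|$ is left unchanged (as in Lemma \ref{lemma:coordinate_frame}, $\|Rp_i + \omega - (Rp_j + \omega)\| = \|p_i - p_j\|$) while the bearing transforms covariantly as $\hat z_k \mapsto R\hat z_k$. The crucial point is that the signum function takes the \emph{scalar} distance error $\|z_k\| - d_k$ as its argument; since this scalar is $SE(N)$-invariant, the coefficient $\text{sign}(\|z_k\| - d_k)$ is untouched by the transformation (including at the measure-zero discontinuity set $\|z_k\| = d_k$, where the value $0$ is likewise invariant). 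Factoring the common rotation $R$ out of the summation then yields the desired identity, exactly as in the earlier proof.

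The one point I expect to require genuine care, and which I would emphasize as the substantive content of the lemma, is explaining \emph{why} this controller is coordinate-free whereas the closely related signum-based controller of \cite{Jafarian2015125} is not. The distinction is structural: in \eqref{eq:position_sign2} the signum acts on a scalar quantity that is invariant under all of $SE(N)$, and the directional content is carried entirely by the covariantly rotating unit vector $\hat z_k$. By contrast, applying the signum componentwise to the vector quantity $z_k - z_k^*$ destroys rotational invariance, since the componentwise signum does not commute with a generic $R \in SO(N)$, which is precisely the reason an orientation alignment of frames becomes necessary there. Having established $SE(N)$-invariance, the conclusion is routine: invariance means each agent's control depends only on quantities it can measure in its own local frame, namely the binary distance comparison $\text{sign}(\|z_k\| - d_k)$ and the local bearing $\hat z_k$, so no common global coordinate system is involved, which completes the argument.
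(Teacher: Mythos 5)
Your proposal is correct and follows essentially the same route as the paper, which itself omits the proof by noting it is analogous to that of Lemma \ref{lemma:coordinate_frame}: you verify $SE(N)$-invariance by observing that $\mathrm{sign}(\|z_k\|-d_k)$ is a rigid-motion-invariant scalar while $\hat z_k$ rotates covariantly, so the rotation $R$ factors out of the sum. Your added structural comparison with the componentwise-signum controller of \cite{Jafarian2015125} is a sound and welcome elaboration of the paper's own remark, but it does not change the argument.
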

The proof of the above lemma is similar to the proof of Lemma~\ref{lemma:coordinate_frame} and is therefore omitted here. 
}

\begin{remark}
The  work closest to the controller setting in this section is the paper  \cite{liu2014controlling}, which studied  the stabilization control of a \textit{cyclic triangular} formation {\color{black} modelled in a directed graph} with the  controller \eqref{eq:position_sign2}. Here we extend such controllers to stabilize a general undirected formation   which is minimally and infinitesimally rigid.  The above controller \eqref{eq:position_sign2} can also be seen as a high-dimensional extension of the one-dimensional formation  controller studied in \cite{de2010control}.
\end{remark}

\begin{remark}
{\color{black}(\textbf{Boundedness of the control input \eqref{eq:position_sign2})} } Note that (since $b_{ij}\in\{-1,0,1\}$) the right-hand side of   \eqref{eq:position_sign2} is composed of the sum of a number of unit vectors multiplied by a signum function. This implies that the formation controller \eqref{eq:position_sign2} is of special interest in practice since the control action is explicitly upper bounded by the cardinality of the set of neighbors for each agent $i$, which prevents  potential implementation problems due to saturation. {\color{black} To be precise, the bound of the magnitude  of the control input for agent $i$ is derived by 
\begin{align}
\|u_i\| = \left\|\sum_{k=1}^{|\mathcal{E}|} b_{ik}  \text{sign}(||z_k|| - d_k)\, \hat z_k \right\| & \leq \sum_{k=1}^{|\mathcal{E}|} \left\| b_{ik} \text{sign}(||z_k|| - d_k)\, \hat z_k \right\| \nonumber \\
& \leq \sum_{k=1}^{|\mathcal{E}|} |b_{ik}| \left\| \text{sign}(||z_k|| - d_k) \right\|\, \left\|\hat z_k \right\|  \nonumber \\
& \leq \sum_{k=1}^{|\mathcal{E}|} |b_{ik}| \nonumber \\
& \leq |\mathcal{N}_i|.
\end{align}
}
\end{remark}

Again, we consider the Filippov solution to the formation control system \eqref{eq:position_sign2}. The differential inclusion $\mathcal{F}(\text{sign}(e_{k}))$  can be calculated as
     \begin{align}
     \mathcal{F}(\text{sign}(e_{k})) =  \left\{
       \begin{array}{cc}
       1     \;\;&  \|z_k\| > d_{k},   \\ \nonumber
       [-1, 1],  \;\;&  \|z_k\| = d_{k},   \\ \nonumber
       -1     \;\;&  \|z_k\| < d_{k}.
       \end{array}
      \right.
      \end{align}
In a compact form, the rigid formation system \eqref{eq:position_sign2} can be rewritten as
\begin{align} \label{eq:compact_position_sign2}
\dot p &\in  \mathcal{F}[- R^{\top}(z) D_{\tilde z} \text{sign}(e)],
\end{align}
where $\text{sign}(e)$ is defined in a component-wise way.

Note that the right-hand side of \eqref{eq:compact_position_sign2} is  measurable  and essentially bounded at any  non-collocated and finite point $p$, and the existence of a local Filippov solution to \eqref{eq:compact_position_sign2} is guaranteed from such an initial point $p(0)$. In the following analysis we will also show that the solutions are bounded and complete.

{\color{black}We now show a subspace-preserving property for the formation control system with the control law \eqref{eq:compact_position_sign2}.
\begin{lemma} \label{lemma_dimension_sign}
For the formation control system described by \eqref{eq:compact_position_sign2}, the (affine) subspace spanned by agents' solutions is invariant over time, i.e., the same as the (affine) subspace spanned by their initial positions. To be precise, there holds
\begin{align}  \label{eq:subspacepreserving_sense}
 \text{span}\left([p_1(t), p_2(t), \ldots, p_n(t)]\right)   = \text{span}\left([p_1(0), p_2(0), \ldots, p_n(0)]\right), \,\,\,\, \forall t \geq 0,
\end{align}
and
\begin{align}  \label{eq:affinesubspacepreserving_sense}
 \text{span}\left([z_1(t), z_2(t), \ldots, z_{|\mathcal{E}|}(t)]\right)   = \text{span}\left([z_1(0), z_2(0), \ldots, z_{|\mathcal{E}|}(0)]\right), \,\,\,\, \forall t \geq 0.
\end{align}

\end{lemma}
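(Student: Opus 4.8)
The plan is to exploit the fact that, by the very form of the control law \eqref{eq:position_sign2}, each velocity $\dot p_i$ is a linear combination of the current relative position vectors $z_l(t)$. Writing $\hat z_l = \tilde z_l z_l$, we have $\dot p_i = -\sum_l b_{il}\,\text{sign}(e_l)\,\tilde z_l\, z_l$, and hence $\dot z_k = \dot p_{\mathcal E_k^{\text{head}}} - \dot p_{\mathcal E_k^{\text{tail}}} = \sum_l c_{kl}(t)\, z_l(t)$ with scalar coefficients $c_{kl}(t) = -(b_{\mathcal E_k^{\text{head}},l} - b_{\mathcal E_k^{\text{tail}},l})\,\text{sign}(e_l)\,\tilde z_l$. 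The crucial observation for the Filippov setting is that passing to the set-valued map in \eqref{eq:compact_position_sign2} only convexifies the discontinuous $\text{sign}(e_l)$ terms, replacing each by a value in $[-1,1]$; it does not alter the structural fact that $\dot z_k$ lies in the linear span of $\{z_l(t)\}$, and along any Filippov solution (which remains non-collocated, so that each $\tilde z_l = 1/\|z_l\|$ is finite) the coefficients $c_{kl}(t)$ stay bounded on every compact time interval.

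First I would prove the relative-position statement \eqref{eq:affinesubspacepreserving_sense}, from which \eqref{eq:subspacepreserving_sense} follows easily. Fix a reference time $\tau \ge 0$, set $S_\tau := \text{span}\{z_1(\tau), \ldots, z_{|\mathcal E|}(\tau)\}$, let $P$ be the orthogonal projection onto $S_\tau^\perp$, and define the scalar function $W(t) := \sum_k \|P z_k(t)\|^2 \ge 0$, which is absolutely continuous along the solution and satisfies $W(\tau)=0$. Differentiating almost everywhere and substituting $P\dot z_k = \sum_l c_{kl} P z_l$ gives $\dot W = 2\sum_{k,l} c_{kl}\,(Pz_k)^\top (P z_l)$, so by Cauchy--Schwarz and the boundedness of the $c_{kl}$ there is a constant $C$ (valid on the compact interval under consideration) with $|\dot W(t)| \le C\, W(t)$ almost everywhere.

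The main step is then a Gr\"onwall-type argument applied in both time directions from $\tau$. From $\dot W \ge -C W$ one gets that $t \mapsto W(t) e^{Ct}$ is non-decreasing, and from $\dot W \le C W$ that $t\mapsto W(t)e^{-Ct}$ is non-increasing; since $W(\tau)=0$ and $W\ge 0$, both force $W(t)\equiv 0$. Hence every $z_k(t)$ lies in $S_\tau$ for all $t$, i.e. $\text{span}\{z_k(t)\} \subseteq S_\tau = \text{span}\{z_k(\tau)\}$. Taking $\tau=0$ yields one inclusion and taking $\tau=t$ yields the reverse inclusion, which together give the equality \eqref{eq:affinesubspacepreserving_sense} for every $t$. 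For \eqref{eq:subspacepreserving_sense}, since each $\dot p_i(t)$ is a combination of the $z_l(t)$, which now all lie in the fixed subspace $S_0 = \text{span}\{z_k(0)\}\subseteq \text{span}\{p_j(0)\}$, integration gives $p_i(t)-p_i(0)\in S_0$, so $\text{span}\{p_i(t)\}\subseteq \text{span}\{p_j(0)\}$; the reverse inclusion follows symmetrically, using $z_k(t)\in\text{span}\{p_j(t)\}$ together with $\text{span}\{z_k(t)\}=\text{span}\{z_k(0)\}$.

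I expect the main obstacle to be the rigorous treatment of the nonsmooth dynamics: establishing that the differential inequality $|\dot W|\le C W$ genuinely holds almost everywhere along an absolutely continuous Filippov solution rather than only formally, and in particular that the convexification of the signum terms preserves the structural expression $\dot z_k=\sum_l c_{kl} z_l$ with uniformly bounded coefficients. This boundedness rests on the solution remaining non-collocated and finite on the interval, which is guaranteed by the completeness and boundedness properties asserted for \eqref{eq:compact_position_sign2}.
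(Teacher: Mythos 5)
Your proposal is correct, and it shares with the paper's proof the same key structural observation: because the couplings in \eqref{eq:position_sign2} are scalar (each velocity is a linear combination of current relative positions with scalar, state-dependent weights), the dynamics cannot move the state out of the subspace it currently spans. Where you diverge is in how that observation is turned into the invariance conclusion. The paper rewrites the closed loop as $\dot p_i = \sum_{j} w_{ij}(t)\, p_j(t)$ with $w_{ij} = \mathrm{sign}(\|p_j-p_i\|-d_{ji})/\|p_j-p_i\|$ (and $\dot z = -((B^{\top}B\Omega)\otimes I_d)z$ for the relative positions), and then simply invokes Theorem 1 and Corollary 1 of an external reference on subspace-preserving scalar-coupled systems; it proves nothing beyond this reduction. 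You instead prove the invariance from scratch: projecting onto $S_\tau^{\perp}$, forming $W(t)=\sum_k\|Pz_k(t)\|^2$, and running a bidirectional Gr\"onwall argument to force $W\equiv 0$, then deducing the position statement \eqref{eq:subspacepreserving_sense} from the relative-position statement \eqref{eq:affinesubspacepreserving_sense} by integration (the paper handles the two statements in the opposite order, each by citation). What the paper's route buys is brevity and reuse of a general result; what yours buys is self-containedness and an explicit treatment of the two points the citation glosses over, namely that the Filippov convexification only replaces $\mathrm{sign}(e_l)$ by values in $[-1,1]$ and hence preserves both the span structure and the coefficient bounds, and that boundedness of the weights requires the solution to remain non-collocated on the interval considered -- an assumption the paper's proof also needs (its weights $\mu_{ij}$ blow up at collisions) but never states. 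Your two-sided Gr\"onwall step, giving both inclusions by choosing $\tau=0$ and $\tau=t$, is a clean substitute for the transition-matrix-type argument that underlies the cited theorem for linear time-varying systems with merely measurable weights.
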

\begin{proof}
In \cite{sun2017dimensional}, it has been proved that any networked dynamical system with scalar couplings in the following form
\begin{align}  \label{eq:coupled_linear_wij}
\dot x_i(t) = \sum_{j =1}^{n} w_{ij}(t) x_j(t),
\end{align}
where $x_i \in \mathbb{R}^d$ and $w_{ij}$ is a scalar (constant or time-varying) coupling weight between agents $j$ and $i$, possesses a subspace-preserving property, in the sense that $\text{span}\left([x_1(t), x_2(t), \ldots, x_n(t)]\right)   = \text{span}\left([x_1(0), x_2(0), \ldots, x_n(0)]\right), \,\,\,\, \forall t \geq 0$. (See Theorem 1 and Corollary 1 of \cite{sun2017dimensional}). Note that the formation system with the control law \eqref{eq:compact_position_sign2} can be equivalently written as 
\begin{align}
\dot{p}_i=  \sum_{j\in \mathcal N_i} \frac{\text{sign}(||p_j - p_i|| - d_{ji})}{\|p_j - p_i\|} (p_j - p_i),
\end{align}
where $\mathcal N_i$ denotes agent $i$'s neighbor set, and $d_{ji}$ denotes the desired distance between agents $i$ and $j$. Denote $\mu_{ij}: = \frac{\text{sign}(||p_j - p_i|| - d_{ji})}{\|p_j - p_i\|}$. 
One can observe that  
\begin{align}
w_{ij}  &= \mu_{ij}, \,\,\,\text{if}\,\,\, (i,j) \in \mathcal{E}; \,\,\,\text{or}\,\,\,w_{ij}  = 0, \,\,\,\text{if}\,\,\, (i,j) \notin \mathcal{E}\nonumber \\  
w_{ii} &= -\sum_{j\in \mathcal N_i}   \mu_{ij}.
\end{align}
Therefore, the formation control system \eqref{eq:compact_position_sign2} can be rewritten in the form as $\dot p_i = \sum_{j =1}^{n} w_{ij}(t) p_j(t)$, which can be considered  as a special case of the general coupled system described by \eqref{eq:coupled_linear_wij}.  The subspace-preserving property then follows from \cite{sun2017dimensional}.

Now we define a weighted Laplacian matrix as $L_\omega = B \Omega B^\top$,  with the state-dependent diagonal weight matrix $\Omega$ defined as  $\Omega = \text{diag}(\omega_1, \omega_2, \ldots, \omega_{|\mathcal{E}|})$. Then a compact form of the formation position system can be obtained as 
\begin{align}
\dot p = - (L_\omega \otimes I_d)p.
\end{align}
Furthermore, from  \eqref{z_equation}
there holds
\begin{align}
\dot z &= {\overline B}^{\top} \dot p = - \left((B^\top B \Omega B^\top) \otimes I_d \right) p \nonumber \\
&= - \left((B^\top B \Omega) \otimes I_d \right) z.
\end{align}
\end{proof}
Then again, the affine-subspace-preserving property of the $z$ system in the sense of \eqref{eq:affinesubspacepreserving_sense} follows from \cite[Theorem 1]{sun2017dimensional}. \qedsymbol
\begin{remark}
An intuitive interpretation of the (affine) subspace-preserving property shown in 
Lemma~\ref{lemma_dimension_sign} is the invariance of collinear or coplanar positions for formation control systems in 2-D/3-D spaces. 
That is, for 2-D formations, if all    agents start with collinear positions, they will always be in that collinear subspace spanned by their initial positions under the general  control law described by \eqref{eq:compact_position_sign2}. Similarly, for 3-D formations, if all the agents start with coplanar (resp. collinear) positions, then their positions will always be coplanar (resp. collinear) under the control \eqref{eq:compact_position_sign2}. 
The paper \cite{liu2014controlling} presents a detailed analysis on such a collinear invariance principle for a triangular formation system under the control  \eqref{eq:compact_position_sign2}. In this sense, Lemma~\ref{lemma_dimension_sign} presents a more general invariance result which applies for any 2-D/3-D formation control systems with the control law  \eqref{eq:compact_position_sign2}. 
\end{remark}
}

Similar to the analysis in deriving the distance error system shown in Section \ref{sec:properties_system}, the distance error system with binary distance information can be obtained as
\begin{align} \label{error system_sign}
\dot e \in \mathcal{F}  [ -  D_{\tilde z}  R(z) R^{\top}(z) D_{\tilde z} \text{sign}(e)], \,\,\,\,\text{a. e.}
\end{align}
Again, similar to the analysis for \eqref{error system}, one can also show that \eqref{error system_sign} is a self-contained system when $e$ takes values locally around the origin.

\subsection{Convergence analysis}
The main result in this section is stated in the following   convergence theorem for the formation controller \eqref{eq:compact_position_sign2} with  binary distance information.
\begin{theorem}  \label{theorem:binary_distance}
Suppose the target formation is infinitesimally and minimally rigid, the initial formation shape is close to the target formation shape, and the formation controller \eqref{eq:position_sign2} with binary distance information is applied.
\begin{itemize}
\item The formation converges locally to a static target formation shape;
\item The convergence is achieved within a finite time upper bounded by $T^* = \frac{  \|e(0)\|_1}{ \bar \lambda_{\text{min}}}$ with $\bar \lambda_{\text{min}}$  defined as
{\color{black} 
\begin{align}  \label{eq:finite_time_sign}
\bar \lambda_{\text{min}}  =  \mathop {\min }\limits_{e \in \mathcal{B}} \lambda  (Q(e)) >0,
\end{align}
where $Q(e) :=D_{\tilde z} R(z)R^{\top}(z) D_{\tilde z}$, and $\mathcal{B}(\rho)= \{e: V(e) \leq \rho\}$ is a sub-level set of some suitably small constant $\rho>0$,    such that when $e \in \mathcal{B}(\rho)$ the  formation is infinitesimally minimally rigid.
}

\end{itemize}
\end{theorem}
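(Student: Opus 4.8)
The plan is to mirror the nonsmooth Lyapunov argument used in the proof of Theorem~\ref{theorem:convergence_quantization}, specializing the construction \eqref{eq:lyapunov_quantization} to the signum quantizer. Here $q(s)=\text{sign}(s)$, so $V_k(e_k)=\int_{0}^{e_k}\text{sign}(s)\,\text{d}s=|e_k|$ and hence $V(e)=\sum_{k=1}^{|\mathcal{E}|}|e_k|=\|e\|_1$. This $V$ is positive definite (vanishing only at $e=0$) and, by the same corner-of-convex-type reasoning as in Lemma~\ref{lemma:regularity}, regular everywhere; its generalized gradient is $\partial V(e)=\mathcal{F}[\text{sign}(e)]$, i.e.\ the box whose $k$-th component is $\{\text{sign}(e_k)\}$ when $e_k\neq 0$ and $[-1,1]$ when $e_k=0$. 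As in Theorem~\ref{theorem:convergence_quantization}, I would fix $\rho$ small enough that on the compact sublevel set $\mathcal{B}(\rho)=\{e:V(e)\le\rho\}$ the framework stays infinitesimally minimally rigid, so that $R(z)R^{\top}(z)$, $D_{\tilde z}$, and therefore $Q(e)=D_{\tilde z}R(z)R^{\top}(z)D_{\tilde z}$, are positive definite; continuity of eigenvalues over the compact set $\mathcal{B}(\rho)$ then guarantees that $\bar\lambda_{\text{min}}$ in \eqref{eq:finite_time_sign} is well defined and strictly positive. The error dynamics read $\dot e\in\mathcal{F}[-Q(e)\text{sign}(e)]=-Q(e)\,\partial V(e)$.

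The core step is to bound the set-valued derivative $\tilde{\mathcal{L}}V(e)$ along \eqref{error system_sign} and, crucially, to show that it is dominated by the \emph{constant} $-\bar\lambda_{\text{min}}$ rather than by a quantity that degrades as $e\to 0$. Writing $I_0=\{k:e_k=0\}$, any admissible $v=-Q(e)\eta$ with $\eta\in\partial V(e)$ must satisfy $\zeta^{\top}v=a$ for \emph{all} $\zeta\in\partial V(e)$; since the components $\zeta_k$ with $k\in I_0$ are free in $[-1,1]$, this forces $(Q(e)\eta)_k=0$ for $k\in I_0$, and a short computation then gives $a=-\eta^{\top}Q(e)\eta$. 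Because $\eta_k=\text{sign}(e_k)=\pm 1$ for every $k\notin I_0$, and $e\neq 0$ guarantees at least one such index, one has $\|\eta\|^2\geq 1$, whence
\begin{align*}
\max\tilde{\mathcal{L}}V(e)\leq -\eta^{\top}Q(e)\eta\leq -\bar\lambda_{\text{min}}\|\eta\|^2\leq -\bar\lambda_{\text{min}}
\end{align*}
for every $e\in\mathcal{B}(\rho)\setminus\{0\}$; when $\tilde{\mathcal{L}}V(e)$ is empty, the convention $\max(\emptyset)=-\infty$ makes the same bound hold trivially.

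With this uniform strict decrease established, the two conclusions follow quickly. Monotonicity of $V$ keeps the Filippov solution inside the compact set $\mathcal{B}(\rho)$, so the solutions are bounded and complete, and since $\max\tilde{\mathcal{L}}V(e)<0$ for all $e\neq 0$, the only invariant set of interest is $\{e=0\}$; convergence to the correct static target shape then follows from the nonsmooth invariance principle \cite[Theorem~3]{bacciotti1999stability} (static because $\dot p=0$ once $e=0$ by \eqref{eq:position_sign2}). For the explicit settling time, integrating $\dot V\leq-\bar\lambda_{\text{min}}$ along the absolutely continuous solution yields $V(e(t))\leq\|e(0)\|_1-\bar\lambda_{\text{min}}\,t$, so $V$, and therefore $e$, reaches zero no later than $T^*=\|e(0)\|_1/\bar\lambda_{\text{min}}$, after which $e$ remains at the equilibrium.

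I anticipate the main obstacle to be the careful evaluation of $\tilde{\mathcal{L}}V$ at the nondifferentiable points where some $e_k=0$: one must argue correctly through the constraint $(Q(e)\eta)_{I_0}=0$ and verify that the admissible set either is empty or produces only the values $-\eta^{\top}Q(e)\eta$. The decisive structural feature, and what separates this finite-time result from a merely asymptotic one, is that the signum nonlinearity keeps $\|\eta\|$ bounded \emph{below} by a positive constant as long as $e\neq 0$, so the Lyapunov decrease never vanishes near the target; this is precisely what converts the $\ell_1$ Lyapunov function into the clean linear-in-time decay underlying the bound $T^*$.
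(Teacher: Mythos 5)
Your proof is correct and follows essentially the same route as the paper's: the $\ell_1$ Lyapunov function $V(e)=\|e\|_1$ on a compact sublevel set $\mathcal{B}(\rho)$ where $Q(e)$ is positive definite, the uniform bound $\max\tilde{\mathcal{L}}V(e)\leq-\bar\lambda_{\text{min}}$ for $e\neq 0$, the nonsmooth invariance principle for convergence, and the linear decay giving $T^*=\|e(0)\|_1/\bar\lambda_{\text{min}}$. If anything, your handling of the set-valued derivative at points where some $e_k=0$ (forcing $(Q(e)\eta)_{I_0}=0$ and then bounding $\|\eta\|^2\geq 1$) is more careful than the paper's ``natural choice of $v$'' shortcut, and your direct integration of $\dot V\leq-\bar\lambda_{\text{min}}$ simply replaces the paper's citation of the finite-time Lyapunov theorem of Cort\'es.
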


\begin{proof}
Choose the Lyapunov function defined as $V = \sum_{k=1}^m V_k(e_k)$ with $V_k(e_k) = |e_k|$ for the distance error system \eqref{error system_sign}. Note that $V$ is a convex and regular function of $e$. Also $V$ is locally Lipschitz at $e = 0$ and is continuously differentiable at all other points.  The generalized derivative of $V_k(e_k)$ can be calculated as
\begin{align}
\partial V_k = \left\{
\begin{array}{cc}
       1,  &   e_{k} >0;    \\ \nonumber
       [-1, 1],  &   e_{k} = 0;   \\ \nonumber
       -1,  &   e_{k} < 0.
       \end{array}
      \right.
\end{align}
and the generalized derivative of $V$ can be calculated similarly via the product rule (see \cite{cortes2008discontinuous}).  We  define a sub-level  set $\mathcal{B}(\rho)= \{e: V(e) \leq \rho\}$ for some suitably small  $\rho$,    such that when $e \in \mathcal{B}(\rho)$ the  formation is infinitesimally minimally rigid and $R(z)R^{\top}(z)$ and $D_{\tilde z}$ are \emph{positive definite}. Now the matrix $Q(e) :=D_{\tilde z} R(z)R^{\top}(z) D_{\tilde z}$ is also \emph{positive definite} when $e \in \mathcal{B}(\rho)$. Let $\bar \lambda_{\text{min}}$ denote the smallest eigenvalue of $Q(e)$ when $e(p)$ is in the compact set $\mathcal{B} $ (i.e. $\bar \lambda_{\text{min}}  =  \mathop {\min }\limits_{e \in \mathcal{B}} \lambda  (Q(e)) >0$).

In the following, we calculate the set-valued derivative of $V$ along the trajectory described by the differential inclusion \eqref{error system_sign}.  The argument follows similarly to the analysis in the proof of Theorem \ref{theorem:convergence_quantization}.
By applying the calculation rule for the set-valued derivative (see \cite[Pages 62-63]{cortes2008discontinuous}), one can obtain
\begin{align}
\dot V(e)_{\eqref{error system_sign}} & \in \tilde{\mathcal{L}}_{\eqref{error system_sign}}V(e)  = \{  a \in \mathbb{R}| \exists v \in \dot e_{\eqref{error system_sign}}, \,\,\, \nonumber \\
 & \text{such that}\,\,\, \zeta^{\top} v= a, \forall \zeta \in \partial V(e) \}.
\end{align}
If the set $\tilde{\mathcal{L}}_{\eqref{error system_sign}}V(e)$   is not empty, there exists   $v \in -Q(e)  \text{sign}(e)$ such that $\zeta^{\top} v= a$ for all $\zeta \in \partial V(e)$. A natural choice of $v$ is to set $v \in -Q(e) \zeta$, with which one can obtain $a = -\text{sign}^{\top}(e)Q(e)\text{sign}(e)$.
 Then one can further show
\begin{align} \label{eq:Lya_derivative1}
\text{max}(\tilde{\mathcal{L}}_{\eqref{error system_sign}}V(e))   \leq    - \bar \lambda_{\text{min}} \text{sign}(e)^{\top} \text{sign}(e),
\end{align}
if the set is not empty, while if it is empty we adopt the convention  $\text{max}(\tilde{\mathcal{L}}_{\eqref{error system_sign}}V(e)) = - \infty$. Note that this implies that $V$ is non-increasing, and consequently the Filippov solution $e(t)$ is bounded. Thus, all solutions to \eqref{error system_sign} (as well as the solutions to \eqref{eq:compact_position_sign2}) are complete and can be extended to $t = \infty$ (i.e., there is no finite escape time). It can be seen that $\text{max}(\tilde{\mathcal{L}}_{\eqref{error system_sign}}V(e)) \leq 0$ for all $e \in \mathcal{B}(\rho)$ and
$0 \in \text{max}(\tilde{\mathcal{L}}_{\eqref{error system_sign}}V(e))$ if and only if $e  = 0$. According to the nonsmooth invariance principle \cite[Theorem 3]{bacciotti1999stability}, the asymptotic  convergence   is proved.

We then prove the stronger convergence result, i.e., the finite-time convergence. From the definition of the \emph{sign} function in \eqref{eq:definition_sign}, there holds $\text{sign}(e)^{\top} \text{sign}(e) >1$ for any $e \neq 0$, which implies
\begin{align} \label{eq:Lya_derivative1}
\text{max}(\tilde{\mathcal{L}}_{\eqref{error system_sign}}V(e))   \leq   - \bar \lambda_{\text{min}}
\end{align}
for any $e \neq 0$. Thus, by applying the Finite-time Lyapunov Theorem \cite{cortes2006finite}, any solution starting at $e(0) \in \mathcal{B}(\rho)$ reaches the origin in finite time, and the convergence time is upper bounded by $T^* = V(e(0))/\bar \lambda_{\text{min}} = \|e(0)\|_1/\bar \lambda_{\text{min}}$.

\end{proof}

\begin{remark}
(\textbf{Finite time formation convergence}) Different to the finite time convergence to an approximate formation shape under uniform quantizers as shown in Theorem \ref{theorem:convergence_quantization},  in Theorem \ref{theorem:binary_distance} it is shown the formation system converges locally to a correct formation shape  under binary distance measurements, which is a more desirable convergence result. Also,  compared to the finite time formation controller discussed in the paper \cite{sun2016finite} in which a $sig$ function is used, the finite time formation controller in \eqref{eq:position_sign2} requires less information in the distance measurements, in which very coarse  measurements in terms of binary signals are sufficient. 
\end{remark}

{\color{black} 
\begin{remark}
(\textbf{Effects on finite settling time}) According to the formula of the convergence time associated with \eqref{eq:finite_time_sign}, the upper bound on the finite setting time is affected by the initial shape (in the form of 1-norm of the distance error vectors) and the least singular value of the matrix $R^{\top}(z) D_{\tilde z}$ when the formation shape is evolved in the set $\mathcal{B}$. Roughly speaking, when a shape is close enough to the target shape, the least singular value $\bar \lambda_{\text{min}}$ could be approximated by $ \lambda  (Q(0))$, i.e., the least singular value when the formation is at the desired shape. In this sense, a formation control system  with a large $ \lambda  (Q(0))$ at the desired shape will generally have a shorter convergence time under the control law \eqref{eq:position_sign2} when the initial formation shape is close to the target shape. 
\end{remark}

}

\begin{remark} (\textbf{Dealing with chattering})
In the controller \eqref{eq:position_sign2} the sign function is used, which may cause chattering of the solutions to the formation system  when the formation is very close to the desired one (i.e. when $e$ is very close to the origin). This is because in practice imperfections (e.g., perturbations in measurements or delays) could cause agents' state trajectories to `chatter' across the discontinuity surface (see e.g. \cite[Chapter 3.5]{sastry2013nonlinear}).   Possible solutions to eliminate the chattering include the following:
\begin{itemize}
\item Add deadzone (approximated by smooth functions) to the sign function around the origin (similar to the case of uniform quantizers; see Part 1 of Theorem \ref{theorem:convergence_quantization}).
This will give rise to a trade-off in the convergence, i.e., the distance error does not converge to  the origin but to a bounded set, the size of which depends on  (for a fixed number of agents) how large the deadzone parameter is chosen  (see e.g. \cite{lee2007chattering, gupta2006frequency});
\item Use the hysteresis principle in the quantization function design \cite{ceragioli2011discontinuities}.

\end{itemize}
The adoption  of the above techniques to avoid chattering will be discussed in future research.
\end{remark}

\section{Asymmetric  uniform  quantizer} \label{sec:asymmetric_quantizer}
In \cite{liu2012quantization}, it has been shown that when an \emph{asymmetric} uniform  quantizer (defined below) is applied to double-integrator consensus dynamics some undesirable motions may occur. In this section we investigate whether there  are undesired motions for rigid formation control in the presence of an {asymmetric} uniform  quantizer.

We consider the following   \emph{asymmetric} uniform quantizer (the same as in \cite{liu2012quantization}), defined by
\begin{align} \label{eq:asy_uniform_quantizer}
q_u^*(x) = \delta_u \left( \floor*{\frac{x}{\delta_u}} \right ),
\end{align}
where $\delta_u$ is a positive number and $\floor*{a}, a \in \mathbb{R}$ denotes   \emph{the greatest
integer that is less than or equal to} $a$. For a comparison of the uniform quantizers defined in \eqref{eq:sym_uniform_quantizer} and in \eqref{eq:asy_uniform_quantizer}, see Fig. \ref{fig:quantization_function}.

\begin{figure}
  \centering
  \includegraphics[width=110mm]{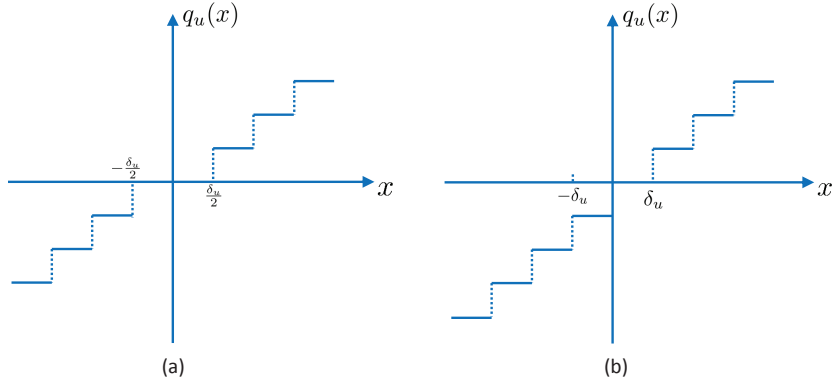}
  \caption{(a) Symmetric uniform quantizer function, defined in \eqref{eq:sym_uniform_quantizer}. (b) Asymmetric uniform quantizer function, defined in \eqref{eq:asy_uniform_quantizer}.}
  \label{fig:quantization_function}
\end{figure}

\subsection{Motivating example: two-agent formation case}
We first consider a two-agent formation case. Suppose two agents are controlled to achieve an inter-agent distance of $d_{12}$  with the  quantization function \eqref{eq:asy_uniform_quantizer}. The system dynamics for agents 1 and   2 can be described, respectively, as
\begin{align} \label{eq:asy_two_agent1}
\dot{p}_1=  q_u^* \left( ||z_1|| - d_{12} \right)\, \hat z_1
\end{align}
and
\begin{align} \label{eq:asy_two_agent2}
\dot{p}_2= - q_u^* \left( ||z_1|| - d_{12} \right)\, \hat z_1
\end{align}
where $z_1 = p_2 - p_1$, and $q_u^*(\cdot)$ denotes the asymmetric  uniform  quantizer  in \eqref{eq:asy_uniform_quantizer}.

\begin{lemma}
Consider the two-agent formation control system \eqref{eq:asy_two_agent1} and \eqref{eq:asy_two_agent2} with the asymmetric quantization function \eqref{eq:asy_uniform_quantizer}.
\begin{itemize}
\item If the initial distance between agents 1 and 2 is greater than   $d_{12} + \delta_u$, then the inter-agent distance $\|z\|$ will converge  to $d_{12} + \delta_u$ and the final formation will be stationary;
\item If the initial distance between agents 1 and 2 is smaller than the desired distance $d_{12}$, then the inter-agent distance $\|z\|$ will converge to the desired distance $d_{12}$ and the final formation will be stationary;
\item If the initial distance between agents 1 and 2 is between $d_{12}$ and $d_{12} + \delta_u$, then  both agents 1 and 2 remain  stationary and the inter-agent distance $\|z\|$ remains unchanged.
\end{itemize}
\end{lemma}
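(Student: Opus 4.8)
The plan is to reduce the whole analysis to a one-dimensional study of the inter-agent distance. First I would set $e := \|z_1\| - d_{12}$ and observe that the relative position obeys $\dot z_1 = \dot p_2 - \dot p_1 = -2\, q_u^*(e)\, \hat z_1$, so that $\dot z_1$ is always collinear with $z_1$. Consequently the bearing $\hat z_1$ is invariant along every solution (the two agents move only along the line joining them), the centroid stays fixed since $\dot p_1 + \dot p_2 = 0$, and the dynamics collapse to a scalar equation for $r := \|z_1\|$. Differentiating the norm gives $\frac{d}{dt}\|z_1\| = \hat z_1^\top \dot z_1 = -2\, q_u^*(r - d_{12})$, which I would read in the Filippov sense as the scalar inclusion $\dot r \in \mathcal{F}[-2 q_u^*](r - d_{12})$. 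Since $r$ will turn out to stay bounded away from zero in each case, the direction-preservation claim is justified and this reduction is legitimate.

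Next I would catalogue the scalar vector field directly from the floor-based definition \eqref{eq:asy_uniform_quantizer}: $q_u^*(e) = 0$ precisely for $e \in [0, \delta_u)$, while $q_u^*(e) \geq \delta_u > 0$ for $e \geq \delta_u$ and $q_u^*(e) \leq -\delta_u < 0$ for $e < 0$. Hence the set of Filippov equilibria $\{r : 0 \in \mathcal{F}[-2q_u^*](r - d_{12})\}$ is exactly the closed interval $[d_{12}, d_{12} + \delta_u]$, its two endpoints arising from the set-valued map evaluated at the discontinuities $e = 0$ and $e = \delta_u$. The three cases then follow from the sign of $\dot r$ outside this interval. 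If $r(0) > d_{12} + \delta_u$, then $\dot r$ equals a strictly negative constant on each quantization cell, so $r$ decreases monotonically and reaches $d_{12} + \delta_u$ in finite time; if $r(0) < d_{12}$, then $\dot r > 0$ is constant on each cell, so $r$ increases monotonically to $d_{12}$; and if $r(0) \in (d_{12}, d_{12} + \delta_u)$, then $q_u^*(e(0)) = 0$, so $\dot p_1 = \dot p_2 = 0$ and the configuration is stationary from the outset.

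The step I expect to be the main obstacle is making rigorous the convergence to, and stationarity at, the upper boundary $r = d_{12} + \delta_u$ in the first case, since the quantizer is discontinuous there and $q_u^*(\delta_u) = \delta_u \neq 0$ pointwise. I would resolve this by computing the Filippov set-valued map at that discontinuity: the left limit of $q_u^*$ is $0$ and the right limit is $\delta_u$, so $\mathcal{F}[-2q_u^*](\delta_u) = [-2\delta_u, 0] \ni 0$, making $r = d_{12} + \delta_u$ a genuine Filippov equilibrium. Because the field is strictly negative just above this value and identically zero just below it, the absolutely continuous solution arriving from above cannot cross into $(d_{12}, d_{12} + \delta_u)$ (it would instantaneously halt), so the only consistent continuation selects the zero element of the set-valued map and $r$ settles exactly at $d_{12} + \delta_u$. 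Finally, stationarity of the full formation in the first two cases follows because at the limiting distance the admissible velocity $0 \in \mathcal{F}[-2q_u^*]$ forces $\dot p_1 = \dot p_2 = 0$, so both agents come to rest.
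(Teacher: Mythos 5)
Your proof is correct, but it takes a genuinely different route from the paper. The paper omits the proof entirely, stating it ``can be inferred from previous proofs,'' i.e., from the Lyapunov machinery of Theorem \ref{theorem:convergence_quantization}: the function $V(e)=\int_0^{e}q_u^*(s)\,\mathrm{d}s$, the set-valued derivative bound, and the nonsmooth invariance principle of \cite{bacciotti1999stability}. That machinery, however, only yields convergence of $e$ to the equilibrium \emph{set} $[0,\delta_u]$ (this is exactly what Theorem \ref{theorem:asym_case} states for the general case); it does not by itself distinguish the three cases of the lemma, namely \emph{which} point of the interval is reached from above, from below, or from inside. Your reduction to the scalar Filippov inclusion $\dot r \in \mathcal{F}[-2q_u^*](r-d_{12})$ and the subsequent phase-line analysis is precisely what delivers that sharper information: monotonicity of $r$, finite-time arrival at the boundary, the identification of $d_{12}+\delta_u$ versus $d_{12}$ as the respective limits, and stationarity inside $(d_{12},d_{12}+\delta_u)$. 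Your treatment of the delicate point --- that a trajectory hitting $e=\delta_u$ from above cannot cross into $(0,\delta_u)$, because there the Filippov map is the singleton $\{0\}$ and an absolutely continuous function with $\dot e=0$ a.e.\ on an interval is constant there, contradicting continuity at the crossing time --- is the right argument, and the same reasoning at $e=0$ handles the approach from below; note this also closes the loop on stationarity, since $r$ constant forces the measurable selection in $\mathcal{F}$ to be zero a.e., hence $\dot p_1=\dot p_2=0$ a.e. In short, your elementary one-dimensional analysis buys the case-by-case endpoint identification that the lemma actually asserts, whereas the paper's invariance-principle route buys uniformity (it generalizes verbatim to $n$ agents, as in Theorem \ref{theorem:asym_case}) but would still need an argument like yours to pin down the limit points in the two-agent case.
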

The proof is obvious and is omitted here as it can be inferred from previous proofs.
\begin{remark}
In the above example it can be seen that in the case of an asymmetric uniform  quantizer, there   exist no undesired motions, which is different to the result observed in \cite{liu2012quantization} which showed unbounded velocities. Apart from the difference in system dynamics under discussions, the key difference that leads to the distinct behaviors is that when the asymmetric   quantizer is applied to the consensus dynamics (which is to quantize a vector), there holds $\mathcal{F}[q_u^*(r_i - r_j)] + \mathcal{F}[q_u^*(r_j - r_i)] = -\delta_u$ when $r_j - r_i \neq k \delta_u$, where  $r_j - r_i$ denotes the relative position vector (see Section 5 of \cite{liu2012quantization}). Note that in the above formation controller, the quantization applies only  to the distance error term (i.e. $q_u^* \left(  \|p_2-p_1\|-d_{12}  \right)$) which is a scalar, and the asymmetric property of the quantizer only affects the convergence of the distance term.
\end{remark}

\subsection{General formation case}
We consider the general formation case with more than two agents, in which each   agent  employs asymmetric uniform  quantizers in individual controllers.

\begin{theorem} \label{theorem:asym_case}
Suppose each individual agent takes the asymmetric uniform  quantizer \eqref{eq:asy_uniform_quantizer} in the quantized formation controller \eqref{eq:position_system}. Then the inter-agent distances converge within a finite time to the set $$F_{\text{aym}} = \{e | e_{k} \in   [0, \delta_u], k \in \{1, \dots, |\mathcal{E}|\}\}.$$
\end{theorem}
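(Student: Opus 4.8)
The plan is to mirror the nonsmooth Lyapunov argument used for the symmetric uniform quantizer in Theorem \ref{theorem:convergence_quantization}, adapting it to the shifted dead zone induced by the floor-based quantizer \eqref{eq:asy_uniform_quantizer}. First I would compute the Filippov set-valued map of the asymmetric quantizer acting on the scalar distance error $e_k$. Since $q_u^*(e_k) = h\delta_u$ for $e_k \in (h\delta_u, (h+1)\delta_u)$ and the only discontinuities sit at integer multiples of $\delta_u$, the Filippov map returns the single value $h\delta_u$ on each open interval and the closed interval $[(h-1)\delta_u, h\delta_u]$ at each breakpoint $e_k = h\delta_u$. The key observation, replacing the symmetric sign-definiteness property, is that $e_k \mathcal{F}(q_u^*(e_k)) \geq 0$ for all $e_k$, and $0 \in \mathcal{F}(q_u^*(e_k))$ precisely when $e_k \in [0,\delta_u]$; the right endpoint $e_k = \delta_u$ is admissible in the Filippov sense because $0 \in [0,\delta_u] = \mathcal{F}(q_u^*(\delta_u))$. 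This localizes the set of Filippov equilibria of the error dynamics exactly at $F_{\text{aym}}$.

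Next I would introduce the Lyapunov candidate $V(e) = \sum_k V_k(e_k)$ with $V_k(e_k) = \int_0^{e_k} q_u^*(s)\,\text{d}s$, exactly as in \eqref{eq:lyapunov_quantization} but with $q$ replaced by $q_u^*$. Because $q_u^*$ is nondecreasing, each $V_k$ is convex and continuous piecewise-linear, hence regular everywhere with corners of convex type, so the regularity argument of Lemma \ref{lemma:regularity} carries over verbatim. A short computation shows $V_k(e_k) = 0$ precisely when $e_k \in [0,\delta_u]$ and $V_k(e_k) > 0$ otherwise, so $V$ is positive semidefinite and vanishes exactly on $F_{\text{aym}}$. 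I would then restrict attention to a compact sublevel set $\mathcal{B}(\rho)$ on which the formation stays infinitesimally minimally rigid, so that $Q(e) := D_{\tilde z} R(z) R^{\top}(z) D_{\tilde z}$ is positive definite with smallest eigenvalue $\bar \lambda_{\text{min}} > 0$; this is where I need $\delta_u$ small enough that $F_{\text{aym}}$ lies inside the rigidity-preserving neighborhood.

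With these ingredients the remainder reproduces the steps of Theorem \ref{theorem:convergence_quantization}. Computing the set-valued derivative $\tilde{\mathcal{L}}V(e)$ along $\dot e \in \mathcal{F}[-Q(e)\, q_u^*(e)]$ and choosing the natural selection $v \in -Q(e)\zeta$ yields $\text{max}(\tilde{\mathcal{L}}V(e)) \leq -\bar\lambda_{\text{min}}\, q_u^*(e)^{\top} q_u^*(e) \leq 0$, with equality only on $F_{\text{aym}}$; emptiness of the set is handled by the convention $\text{max}(\emptyset) = -\infty$, as before. The nonsmooth invariance principle \cite[Theorem 3]{bacciotti1999stability} then delivers convergence to $F_{\text{aym}}$. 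For the finite-time claim I would note that whenever $e \notin F_{\text{aym}}$ some component obeys $e_k < 0$ or $e_k > \delta_u$, forcing $|q_u^*(e_k)| \geq \delta_u$ and hence $q_u^*(e)^{\top} q_u^*(e) \geq \delta_u^2$; thus the set-valued derivative is bounded above by the strictly negative constant $-\bar\lambda_{\text{min}}\delta_u^2$ outside the compact target set, which combined with compactness of the sublevel sets gives finite-time arrival exactly as in the symmetric uniform case.

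The main obstacle I anticipate is bookkeeping the asymmetry rather than any conceptual difficulty: one must verify carefully that the dead zone sits at $[0,\delta_u]$ rather than symmetrically about the origin, that the breakpoint $e_k = \delta_u$ is correctly absorbed into the equilibrium set through the Filippov map, and that the shifted set $F_{\text{aym}}$ still lies in the region where $Q(e)$ remains positive definite. The scalar nature of the quantized quantity $\|z_k\| - d_k$ is what precludes the drift and unbounded-velocity phenomenon reported in \cite{liu2012quantization} for vector-quantized consensus dynamics, and it is precisely this scalar symmetry that makes the sign-definiteness inequality hold despite the quantizer being asymmetric.
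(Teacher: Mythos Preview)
Your proposal is correct and follows exactly the approach the paper intends: the paper omits the proof entirely, stating that it ``can be directly inferred from the previous proof of Theorem~\ref{theorem:convergence_quantization},'' and you have carried out precisely that inference, correctly identifying the shifted dead zone $[0,\delta_u]$, verifying that $V_k(e_k)=\int_0^{e_k}q_u^*(s)\,\mathrm{d}s$ vanishes exactly there, and reproducing the set-valued Lie derivative and finite-time arguments with the appropriate bookkeeping.
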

The proof is omitted here as it can be directly inferred from the previous proof of Theorem \ref{theorem:convergence_quantization}.

\section{Illustrative examples and simulations} \label{sec:quantization_simulations}
In this section we show several numerical examples  to illustrate the   theoretical results obtained in previous sections. In the following illustrative examples we consider the stabilization control of a five-agent minimally rigid formation in the 3-D space, as a continuation of Example \ref{exp:3D_formation}. The underlying graph describes  a double tetrahedron shape of nine edges (see Figure \ref{fig:tetrahedron_shape} for an illustration),   and the desired distances for all edges are   set as $6$. \footnote{Note that the realization of a target formation with the given nine desired distances is not unique up to rotation and translation \cite{hendrickson1992conditions}.} The initial positions are chosen such that the initial formation is infinitesimally rigid and is close to a target formation shape. For all simulations, we set the quantization gain as $\delta_u = 0.5$.

Agents trajectories, the final formation shape and the evolutions of nine distance errors under symmetric uniform quantization and under logarithmic quantization are shown in Figure  \ref{fig:sym_uniform} and Figure \ref{fig:Loga}, respectively. It is obvious from these two figures that with symmetric uniform quantizer the formation errors converge to the bounded set
$F_{\text{approx}} = \{e | e_{k} \in [-0.25, 0.25], k \in \{1, \dots, m\}\}$ in a \textit{finite time}, and with the logarithmic quantizer the formation converges to the target shape asymptotically, which are consistent with the theoretical results in Theorem \ref{theorem:convergence_quantization}.

The formation convergence behavior with binary distance measurements under the quantization strategy
\eqref{eq:position_sign2} is depicted in Figure \ref{fig:sign}. It can be seen from Figure \ref{fig:sign} that with very coarsely quantized distance measurement via a simple signum function as in  \eqref{eq:position_sign2}, the formation converges to the target shape within a finite time, but the price to be paid is   the occurrence of chattering (as shown in the right part of Figure \ref{fig:sign}).

Finally, when the asymmetric uniform quantizer \eqref{eq:asy_uniform_quantizer} is used in the formation control system \eqref{eq:position_system}, the formation converges to an approximate one with all distance errors converging to the bounded set $F_{\text{aym}} = \{e | e_{k} \in   [0, 0.5], k \in \{1, \dots, |\mathcal{E}|\}\}$ within a finite time, as shown in  Figure \ref{fig:asym}. This supports the conclusion of Theorem \ref{theorem:asym_case}.

{\color{black} For a comparison of formation convergences with different quantization functions, we plot the trajectories of the Lyapunov functions (which are functions of distance errors) for the formation system under the above four different quantization strategies, as shown in Fig.~\ref{fig:comparisons}. In the simulations the initial conditions were chosen to be the same for the four cases. It can be observed from  Fig.~\ref{fig:comparisons} that under symmetric uniform quantization, binary quantization, and asymmetric uniform quantization, the convergences of Lyapunov functions (as functions of distance errors) are achieved within a finite time (though the final values depend on different quantization functions), and that under logarithmic quantization the distance errors converge to zero asymptotically.  All these results are consistent  with the above simulations and support the theoretical developments in the previous sections.  }

\begin{figure}
  \centering
  \includegraphics[width=125mm]{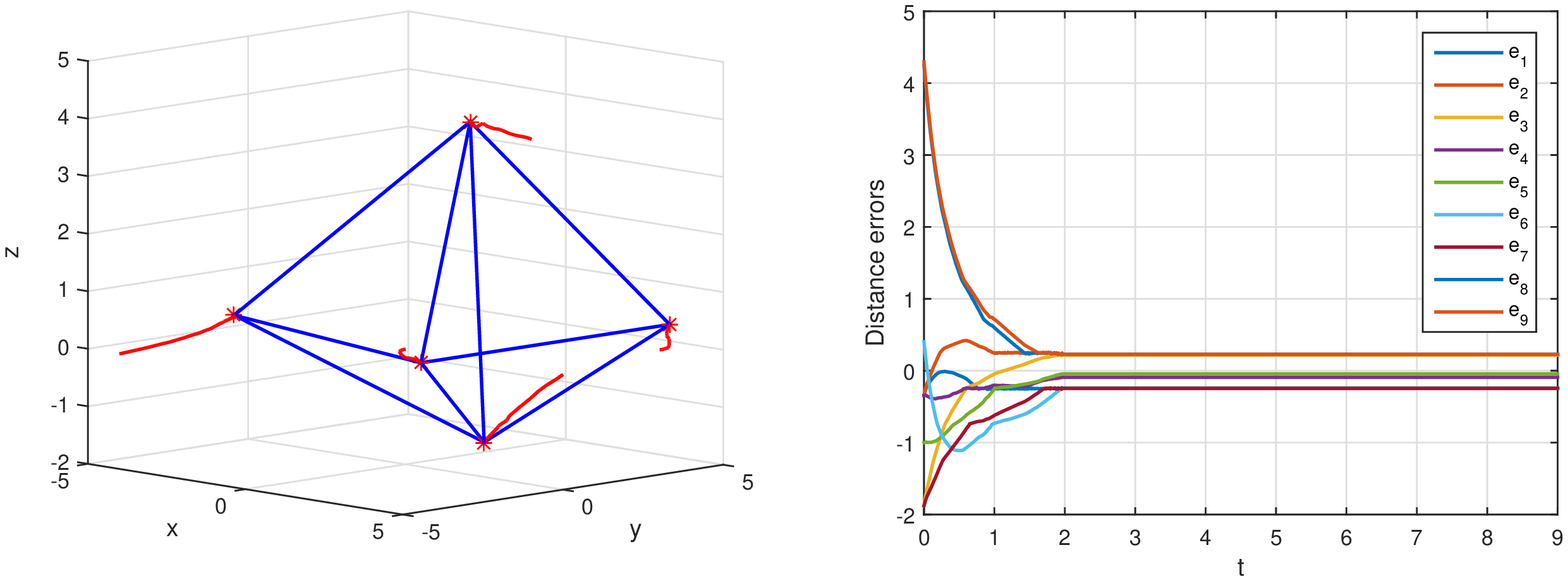}
  \caption{\small Stabilization control of a rigid formation:  symmetric uniform quantization case.
  Left: the trajectories of five agents and the final formation shape. Right: Time evolutions of the  distance errors.  It is obvious from the right figure that the formation shape converges to an approximately correct one in a finite time.}
  \label{fig:sym_uniform}
\end{figure}

\begin{figure}
  \centering
  \includegraphics[width=125mm]{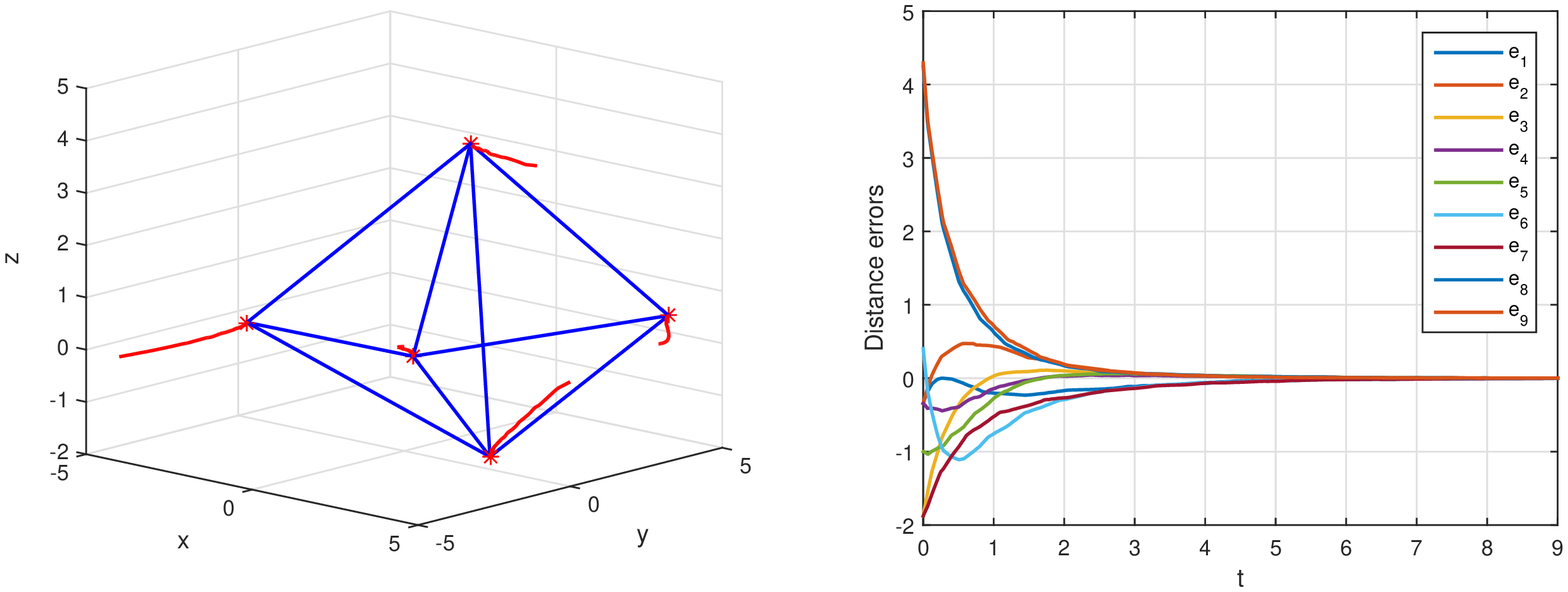}
  \caption{\small Stabilization control of a rigid formation:  logarithmic quantization case.
  Left: the trajectories of five agents and the final formation shape. Right: Time evolutions of the  distance errors. }
  \label{fig:Loga}
\end{figure}

\begin{figure}
  \centering
  \includegraphics[width=125mm]{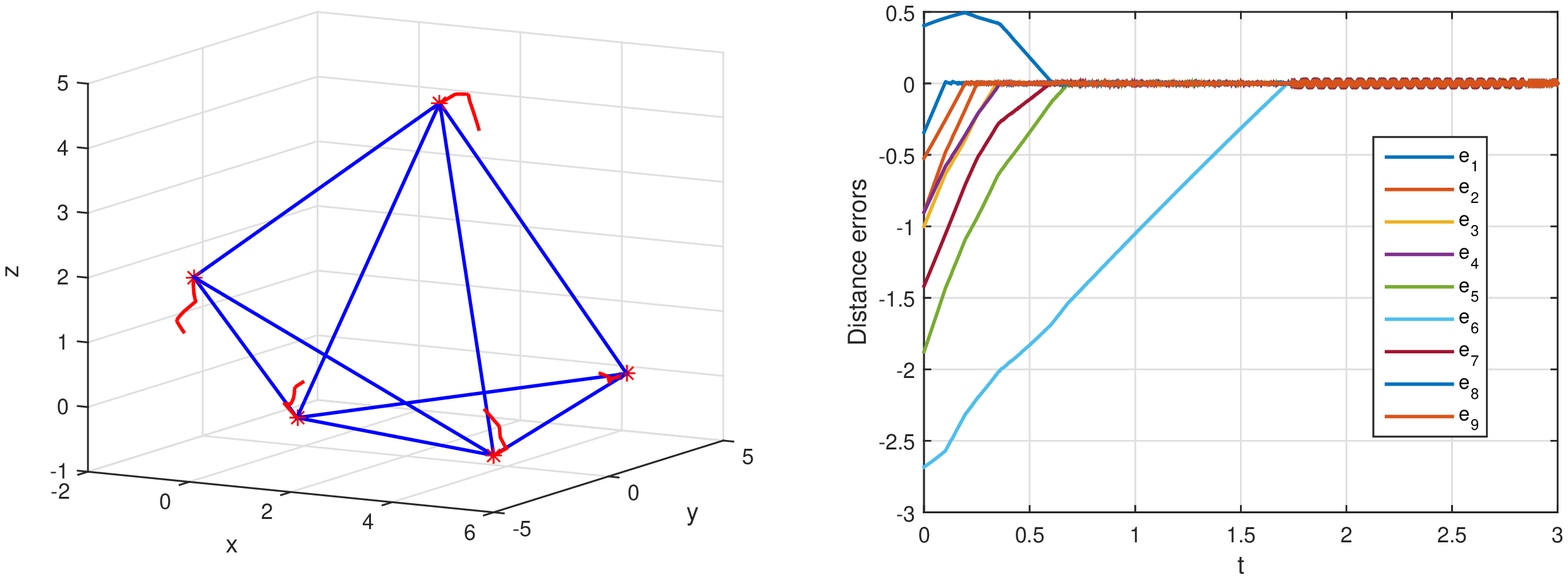}
  \caption{\small Stabilization control of a rigid formation:  binary measurement case.
  Left: the trajectories of five agents and the final formation shape. Right: Time evolutions of the  distance errors.}
  \label{fig:sign}
\end{figure}

\begin{figure}
  \centering
  \includegraphics[width=115mm]{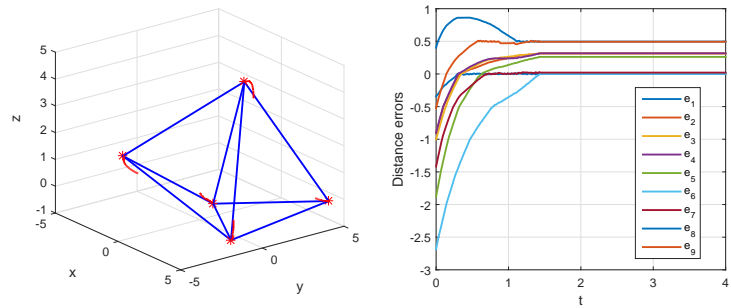}
  \caption{\small Stabilization control of a rigid formation:  asymmetric uniform quantization case.
  Left: the trajectories of five agents and the final formation shape. Right: Time evolutions of the  distance errors.}
  \label{fig:asym}
\end{figure}

\begin{figure}
  \centering
  \includegraphics[width=115mm]{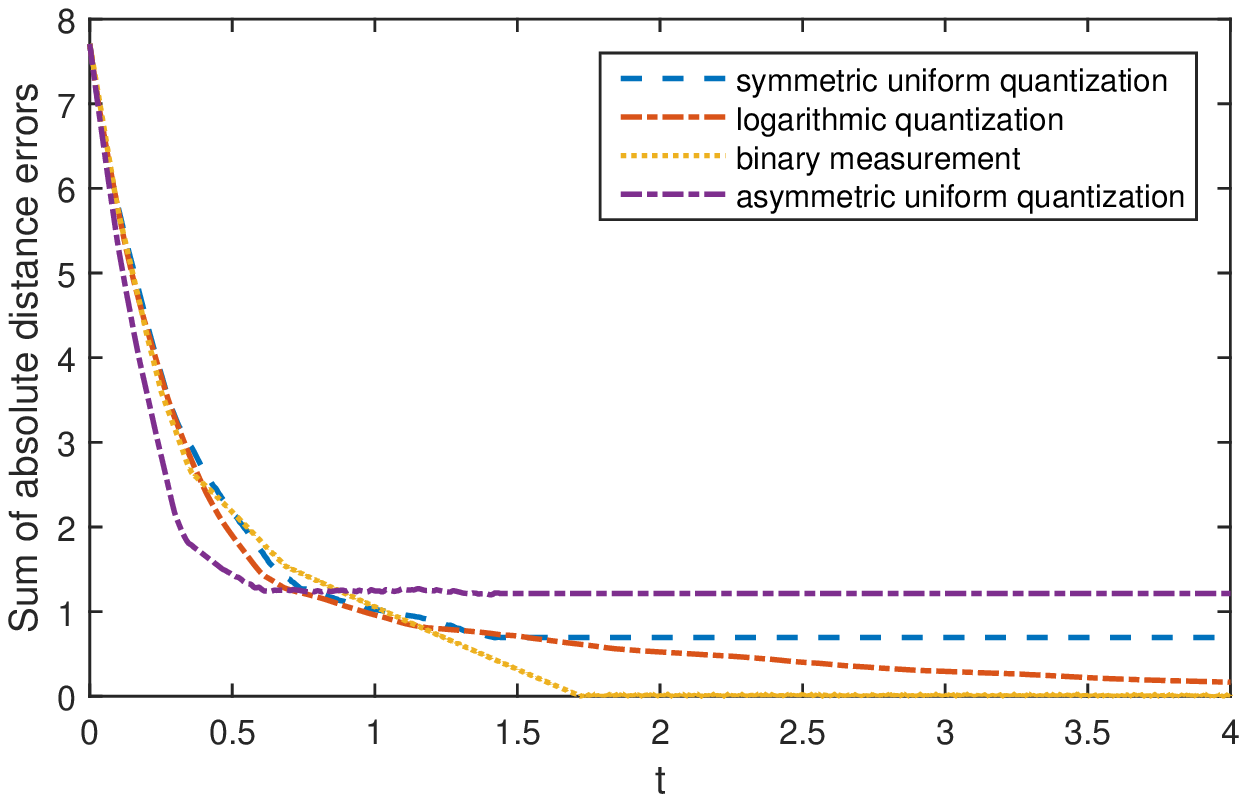}
  \caption{\small Comparisons of convergences of Lyapunov functions (as functions of formation distance errors) under different quantization functions. 
  }
  \label{fig:comparisons}
\end{figure}

\section{Concluding remarks} \label{sec:Conclusions}
In this paper we consider the rigid formation control problem with quantized distance  measurements. We have discussed in detail the quantization effect on the convergence of rigid formation shapes under two commonly-used quantizers. In the case of the symmetric uniform quantizer, all distances will converge locally to a bounded set, the size of which depends on the quantization error. In the case of the logarithmic quantizer, all distances converge locally to the desired values. We also consider a special quantizer with a signum function, which allows each agent to use very coarse distance measurements (i.e. binary information on whether it is close or far away to neighboring agents with respect to the desired distances). We show in this case the formation shape can still be achieved within a finite time. We further discuss the case of an asymmetric quantizer applied in rigid formation control system, and  analyze the convergence property of distance errors.

\section*{Acknowledgment}
This work was supported by the Australian Research Council under grant DP130103610 and DP160104500. Z. Sun was supported by the Prime Minister's Australia Asia Incoming Endeavour Postgraduate Award. H. Garcia de Marina was supported by EU H2020 Mistrale project under grant agreement no. 641606. The work of M. Cao was supported in part by the Netherlands Organization for Scientific Research (NWO-vidi-14134).

\bibliography{Quantization_formation}
\bibliographystyle{wileyj}

\end{document}